\definecolor{link}{rgb}{0,0,0.5}
\newtheorem{theorem}{Theorem}
\newtheorem{lemma}{Lemma}
\newtheorem{corollary}{Corollary}
\newtheorem{proposition}{Proposition}
\newtheorem{fact}{Fact}
\newfont{\classFont}{cmsy10}
\newcommand{\classNP}{\text{\fontencoding{OMS}\fontfamily{cmsy}\selectfont NP}}
\newcommand{\classDTIME}{\text{\fontencoding{OMS}\fontfamily{cmsy}\selectfont DTIME}}
\newcommand{\R}{\mathbb R}
\newcommand{\bigO}[1]{O (#1)}
\newcommand{\bigOmega}[1]{\Omega(#1)}
\newcommand{\Congest}{\text{\fontencoding{OMS}\fontfamily{cmsy}\selectfont CONGEST}}
\newcommand{\connected}{\emph{connected}}
\newcommand{\notconnected}{\emph{not connected}}
\newcommand{\paid}{\emph{currently-paid}}
\newcommand{\open}{\text{open}}
\newcommand{\closed}{closed}
\newcommand{\greedy}{\text{\emph{GreedyFL}}\xspace}
\newcommand{\FRDist}{\text{\emph{FacRev-Dist}}\xspace}
\newcommand{\FRJain}{\text{\emph{FacRev-Jain}}\xspace}
\newcommand{\receive}{\ensuremath{\mbox{\sc  Receive}}}
\newcommand{\send}{\ensuremath{\mbox{\sc  Send}}}
\newcommand{\ie}{i.e\text{.}\xspace}
\newcommand{\abs}[1]{\left\lvert#1\right\rvert}
\newcommand{\WLOG}{w.l.o.g\text{.}\xspace}
\newcommand{\resp}{resp\text{.}\xspace}
\newcommand{\et}{et al\text{.}\xspace}
\newcommand{\eg}{e.g\text{.}\xspace}
\newcommand{\ouralg}{\textsc{FacilitySelect}\xspace}
\title{A Distributed Approximation Algorithm for the Metric Uncapacitated Facility Location Problem in the $\Congest$ Model\thanks{Partially supported by the EU within FP7-ICT-2007-1 under contract no. 215270 (FRONTS),  DFG-project ``Smart Teams'' within the SPP 1183 ``Organic Computing'', and the Paderborn Institute for Scientiﬁc Computation (PaSCo).}}
\author{Patrick Briest \quad Bastian Degener \quad Barbara Kempkes\\ Peter Kling \quad Peter Pietrzyk \\[2mm]
\small Heinz Nixdorf Institute, Computer Science Department,\\[-2mm]
\small University of Paderborn, 33095 Paderborn, Germany\\
\small \textsf{patrick.briest@upb.de, bastian.degener@upb.de, barbaras@upb.de, }\\[-2mm] \small \textsf{peter.kling@upb.de, peter.pietrzyk@upb.de}
}
\begin{document}

\date{}
\maketitle \thispagestyle{empty}

\vspace{-1cm}

\begin{abstract}
We present a randomized distributed approximation algorithm for the metric uncapacitated facility location problem. The algorithm is executed on a bipartite graph in the $\Congest$ model yielding a $(1.861 + \epsilon)$ approximation factor, where $\epsilon$ is an arbitrary small positive constant.  It needs $\bigO{n^{3/4}\log_{1+\epsilon}^2(n)}$ communication rounds with high probability ($n$ denoting the number of facilities and clients). To the best of our knowledge, our algorithm currently has the best approximation factor for the facility location problem in a distributed setting. It is based on a greedy sequential approximation algorithm by Jain \et (J. ACM 50(6), pages: 795-824, 2003). The main difficulty in executing this sequential algorithm lies in dealing with situations, where multiple facilities are eligible for opening, but (in order to preserve the approximation factor of the sequential algorithm) only a subset of them can actually be opened. Note that while the presented runtime bound of our algorithm is ``with high probability'', the approximation factor is not ``in expectation'' but always guaranteed to be $(1.861 + \epsilon)$. Thus, our main contribution is a sublinear time selection mechanism that, while increasing the approximation factor by an arbitrary small additive term, allows us to decide which of the eligible facilities to open.
\end{abstract}

\newpage

\section{Introduction}
Facility location is one of the most studied optimization problems in operations research and captures a large variety of applications. A classical motivation is placing facilities (\eg, warehouses) in such a way that the combined costs of customer satisfaction and warehouse construction are minimized. However, there are also plenty of applications in distributed scenarios. For instance, in wireless networks  a set of  nodes has to be chosen to provide some services (\eg, a distributed database). Making such services available incurs costs at those \emph{facility} nodes, while all remaining nodes act as \emph{clients}. They use the services of the nearest facility node, and are charged a cost proportional to the corresponding distance. The objective is to determine a set of facility nodes such that the costs caused by the facilities and the clients is as low as possible.

\paragraph{Formal problem definition.} We consider the \emph{metric uncapacitated Facility Location} problem in a distributed setting. Here, we are given a complete bipartite graph $G =F \cup C$ consisting of a set of \emph{facilities} $F$ and a set of \emph{clients} $C$. To each facility $i \in F$ an opening cost $f_i \in \R_{\geq 0}$ is assigned. Each edge $\{i,j\}$ in $G$ is weighted with the value $c_{ij} \in \R_{\geq 0}$ that represents the costs of connecting client $j$ with facility $i$. The objective is to determine a subset of the facilities to be opened and connect every client to at least one open facility in such a way that the sum of the opening costs and connection costs is minimized. The linear program representation of the facility location problem and its dual program are as follows:
		%

\[
	\begin{array}{rrclll}
												\multicolumn{6}{c}{ \textrm{{\bf Facility Location IP}}}								 																												 \\
		\mbox{minimize}		& \multicolumn{5}{l}{ \displaystyle \sum_{i \in F} f_i y_i + \sum_{i \in F, j\in C}  c_{ij}x_{ij}} \\
		
		\mbox{subject to}	&  \sum_{i \in F} x_{ij}	&\geq	& 1 		&  j \in C								& (1) \\
					& \displaystyle y_i-x_{ij}				&\geq	& 0 		&  i \in F,  j \in C 	      & (2) \\
					& \displaystyle x_{ij}					&\in	& \{0,1\} 	&  i \in F, j \in C 	    &     \\
					& \displaystyle y_i						&\in	& \{0,1\} 	& i \in F								                  \\
	\end{array} 
\]
\[
	\begin{array}{rrclll}
											\multicolumn{5}{c}{ \textrm{{\bf Dual of the Facility Location LP}}}\\
		\mbox{maximize}& \multicolumn{5}{l}{ \displaystyle \sum_{j \in C} \alpha_j}\\
		\mbox{subject to}	&  \alpha_j -\beta_{ij}	&\leq	& c_{ij} 		&i \in F, j \in C		& (3) \\
                  &  \sum_{j \in C} \beta_{ij}				&\leq	& f_i 		& i \in F  	& (4)\\
   & \displaystyle \beta_{ij}					&\geq & 0 	& i \in F, j \in C 	&\\
                & \displaystyle \alpha_i						&\geq & 0 	& j \in C								&\\
	\end{array} 
\]
The variable $y_i$ indicates whether facility $i$ is open ($y_i = 1$) or closed ($y_i = 0$). The other indicator variable $x_{ij}$ has the value $1$ if the client $j$ is connected to facility $i$, and $0$ otherwise. The constraints ($1$) guarantee that each client is connected to at least one facility, while the constraints ($2$) make sure that a client can only be connected to an open facility. The problem we consider is \emph{metric}, since the values $c_{ij}$ are required to satisfy the triangle inequality (\ie, $\forall i,j,i',j': c_{ij} \leq c_{ij'}+c_{j'i'}+c_{i'j}$), and \emph{uncapacitated}, since an arbitrary number of clients can be connected to an open facility. Furthermore, for the sake of presentation we assume all $f_i,c_{ij}$ to be normalized such that the smallest non-zero value is $1$. The dual program will be used in the description of our approximation algorithm. Intuitively, a $\alpha_j$ variable can be seen as the amount the client $j$ is willing to pay for being connected to a facility. From the point of view of a facility $i$, $\alpha_j$ is the sum of $c_{ij}$ (the amount $j$ pays for a connection to $i$) and $\beta_{ij}$ (the amount it pays for opening $i$).

Our algorithm is executed in the $\Congest$ model, which was introduced by Peleg \cite{Pe00} and is commonly used to model the execution of distributed algorithms on graphs: Algorithms are executed in synchronous send-receive-compute rounds. In a single round, each node sends a message to each of its neighbors in the graph. Note that the messages sent to each neighbor by a single node are not required to contain the same information. Once all nodes have sent their messages, they receive a single message from each of their neighbors. After all the messages have been received, every node is allowed to spend an arbitrary amount of time for computation (\ie, computation is for free and we are only interested in the number of communication rounds). The end of the computation by all nodes marks the start of a new send-receive-compute round. The message size in the $\Congest$ model is bounded. We will limit the size of the messages used in our algorithm to $\bigO{\log(n)}$ bits. The same limitations were also used in \cite{FLapprox}, \cite{FLunfiformApprox} and \cite{FLdistPrimalDual}. This bound is reasonable, because it allows the nodes to send their ID in a single message. Due to this constraint on message size, we restrict the values $f_i$ and $c_{ij}$ in such a way that it is possible to represent them with $\bigO{\log(n)}$ bits (i.e. to be able to send them in a single message). 

The graph our algorithm is executed on is the complete bipartite graph of clients and facilities. This means that within a single round each client can communicate with all facilities and each facility can communicate with all clients. Since nodes of the same partition can not communicate directly with each other, but instead have to use nodes of the other partition as relays, gathering all information about the graph requires $\bigOmega{n}$ rounds. Note that without a limit on the message size our problem could be trivially solved in four rounds (all information about the graph is gathered at a single node in two rounds, an optimal solution is computed and distributed in two more rounds). We want to stress that, although in our algorithm the nodes communicate with all their neighbors in each round, it is possible to restate the algorithm such that nodes $i$ and $j$ where $c_{ij}$ is ``large'' (more than $\max_{j \in C}(\min_{i \in F}(c_{ij}+f_i))$) never communicate with each other. 

\paragraph{Our contribution.} We present a distributed approximation algorithm for the metric uncapacitated facility location problem. It is based on a greedy algorithm by Jain \et \cite{FLgreedy} (from here on referred to as \greedy) and yields a guaranteed (not in expectation) $(1.861+\epsilon)$-approximation in $\bigO{n^{3/4}\log_{1+\epsilon}^2(n)}$ rounds (with high probability). Our algorithm is executed in the $\Congest$ model on a complete bipartite graph. Although it is strongly related to the \emph{GreedyFL} algorithm, there are (due to the parallel execution) new challenges regarding the selection process that occurs when multiple facilities are eligible for opening. The difficulty is that we require the cost of the solution computed by our algorithm to be always at most $(1.861+\epsilon)$ times worse than an optimal solution (in contrast to similar work by Blelloch \et \cite{FLparallel}, where the approximation factor is increased by factor $2$). Moreover, unlike other work (\eg, \cite{podc2010}, where the approximation factor is given ``in expectation''), our algorithm provides a worst-case guarantee on the approximation factor.

\paragraph{Related work.} During the last two decades, the uncapacitated metric facility location problem was of great interest, and a lot of progress has been made concerning the running time and approximation factor of sequential algorithms solving it. Aardal \et introduced the first polynomial time algorithm yielding a $3.16$-approximation \cite{FLfirstApprox}. Improving this approximation factor was the topic of a multitude of research papers. For example, Chudak \et improved the approximation factor to $(1+2/e) \approx 1.74$ \cite{FLLP2}. The factor has been improved, until eventually Byrka designed a $1.5$-approximation algorithm \cite{FLbestKnown}, which, at this point in time, yields the best known approximation. Although both last mentioned algorithms yield very good approximation factors, they have, due to applying LP-Rounding, high running times. Thus, the design of algorithms with slightly worse approximation factors, but better running times, was also of interest in the past.

Jain and Vazirani used the primal-dual approach to develop and analyze an algorithm with approximation factor $3$ and a running time of $O(n^2 \log(n))$ \cite{FLprimalDual}. A simplified and faster $\bigO{n^2}$ version of this algorithm was introduced by Mettu and Plaxton \cite{FLmettuPlaxton}. Later on, Jain \et improved the former results by presenting two algorithms in \cite{FLgreedy}: One with running time $\bigO{m\log(m)}$ and an approximation factor of $1.861$ (which we refer to as \greedy) and another one with running time $\bigO{n^3}$ and an approximation factor of $1.61$ ($n$ denoting the number of nodes and $m$ the number of edges in the complete bipartite graph of facilities and clients). Later on, building upon \cite{FLgreedy}, Mahdian \et improved the factor to $1.52$ \cite{FLMahdian152}.

 Under the assumption that $\classNP \not \subseteq \classDTIME(n^{\log(\log(n))})$, Guha \et showed in \cite{FLhardness} that no polynomial time algorithm with an approximation factor better than $\approx 1.463$ exists. Thus, an improvement of either the lower or the upper bound for the approximation factor, even in the $3$rd or higher fractional digit, is of great interest.

The following results concerning the facility location problem can be found in the distributed scenario: In \cite{FLdistPrimalDual} Pandit \et present an algorithm yielding a $7$-approximation and a running time of $\bigO{\log(n)}$. Their algorithm is a parallel version of the primal-dual algorithm by Jain \et \cite{FLprimalDual} and is -- like our algorithm -- executed in the $\Congest$ model on a complete bipartite graph with message size limited to $\bigO{\log(n)}$ bits. Pandit \et assume that the difference between the opening costs of the cheapest and the most expensive facility can be arbitrary large. It is reasonable to drop this assumption, since they (as well as we) require that the facility opening costs and distances between clients and facilities are encoded with $\bigO{\log(n)}$ bits. By dropping this assumption and modifying the algorithm and analysis by Pandit \et in a small way (changing a factor from $2$ to $(1+\epsilon)$), it is possible to achieve a $(3+\epsilon)$ approximation factor in $\bigO{\log_{1+\epsilon}(n)}$ rounds. Further improvement of the factor with this approach is not possible, since the approximation factor of \cite{FLprimalDual}, which Pandit \et parallelized, is $3$. This means that their algorithm, while faster than ours, has a worse approximation factor.

Recently, in \cite{podc2010} Pandit \et all presented a technique that can be used to execute greedy facility location algorithms (like \greedy) in parallel in polylogarithmic time. Although they consider a similar problem as we do, their results are quite different from ours: Their approximation factor is $\bigO{1}$ in expectation (their algorithm can produce an arbitrary bad solution, even though this is very unlikely), while we can guarantee an approximation factor of $(1.861+\epsilon)$ in the worst case. Also, although they do not state their exact approximation factor, the expected approximation factor achieved with their technique cannot, to the best of our knowledge, be decreased below $4c$, where $c$ is the approximation factor of the sequential algorithm (\ie, $\approx 7.444$ if used with \greedy).

A similar result to \cite{podc2010} was presented by Blelloch \et in \cite{FLparallel}. Instead of the $\Congest$ they use a PRAM model and achieve a $(3.722+\epsilon)$ approximation with running time $\bigO{\log_{1+\epsilon}^2(n)}$ by parallelizing the \greedy algorithm. In order to choose which facilities to open, they use a technique introduced by Rajagopalan and Vazirani \cite{symmetry98}. This technique can also be used to execute greedy facility location algorithms (in a distributed manner and polylogarithmic time) and yields a guaranteed approximation factor of $2c$, where $c$ is again the approximation factor of the corresponding sequential algorithm (\ie $\approx 3.722$  if used with \greedy).

Other results that also use the bound of $\bigO{\log(n)}$ on the message complexity are \cite{FLapprox} and \cite{FLunfiformApprox}. In \cite{FLapprox} Moscibroda \et show that in $\bigO{k^2}$ communication rounds a $\bigO{k(n_F\rho)^{1/k}\log(n_F+n_C)}$ approximation in $\bigO{k^2}$ communication rounds ($n_F$ and $n_C$ are the number of facilities, respectably clients, and $\rho$ a coefficient dependent on instance parameters) can be achieved in the more general \emph{non-metric} facility location problem. Gehweiler \et show in \cite{FLunfiformApprox} that $3$ rounds are sufficient to compute a $\bigO{1}$-approximation for the uniform facility location problem (opening each facility incurs the same costs). \cite{FLunfiformApprox} applies the approach of Mettu and Plaxton \cite{FLmettuPlaxton}, which has been successful in a lot of other settings as well: In the kinetic setting \cite{KFL}, in game theoretic settings \cite{groupstrategyproof}, for algorithms working in sublinear time \cite{FLSublinearTime}, and when confronted with perpetual changes to the problem instance \cite{FLipdps}.

\paragraph{Structure of the Paper.} In Section 2 we give a detailed description of our approximation algorithm and prove its approximation factor by generalizing the techniques and results of \cite{FLgreedy}. Our main contribution is Section 3. Here, we deal with the problem of selecting facilities to be opened that arises due to the parallel execution of the \emph{GreedyFL} algorithm by Jain \et and show that our distributed algorithm presented in Section 2 terminates in sublinear time.     
 
\section{The distributed Approximation Algorithm} Our parallel algorithm is based on the \emph{GreedyFL} algorithm by Jain \et \cite{FLgreedy}. As stated above, it is executed in the $\Congest$ model on a complete bipartite graph. We will first describe the algorithm and then give an analysis of its approximation factor and running time. 

\paragraph{Algorithm description.} Due to the completeness of the bipartite graph, in a single synchronous round each client can send to and receive a message from all facilities. The same applies analogously to each of the facilities. During the algorithm's execution, clients and facilities can be in various states: clients can be \emph{not connected} or \emph{connected}, while facilities can be \emph{closed}, \emph{currently-paid}, or \emph{open}. Intuitively, \emph{currently-paid} facilities compete with each other for the permission to change their state to \emph{open}. All clients start in the \emph{not connected} state and all facilities in the \emph{closed} state. When a client $j$ changes its state to \emph{connected}, it is assigned to an \emph{open} facility $i$ ($\varphi (j) := i$) and never changes its state again. A \emph{closed} facility can only change its state to \emph{currently-paid}, while a \emph{currently-paid} facility can either become \emph{closed} or $\emph{open}$. Once a facility is $\emph{open}$, it never changes its state. The algorithm terminates, once all clients are \emph{connected}. The solution returned by our algorithm is the set of all \emph{open} facilities and the assignment $\varphi$. Each facility knows now whether it is open or not and each client knows the facility it is assigned to.

\begin{algorithm}[ht!]
\caption{\textsc{Executed by client $j$}}
\label{alg:client}
\begin{algorithmic}[1]
\STATE $\forall i \in F: \quad \send[\alpha_j, status(j)]$\\
\STATE Execute Algorithm 4 until there are no \emph{currently-paid} facilities left
\IF {$status(j) = \notconnected$}
	\STATE Compute $\alpha_j := (1+\epsilon) \alpha_j$
\ENDIF	
\end{algorithmic}
\end{algorithm}

The algorithm is executed in a loop and each passing of the loop is referred to as a \emph{phase}. The behavior of a client during a phase is described in Algorithm 1, while the behavior of a facility is described in Algorithm 2. Facilities and clients interact with each other by transmitting messages; to each \textsc{Send} operation there exists a corresponding \textsc{Receive} operation.

The idea of the algorithm is to assign to each client $j$ a variable $\alpha_j$ initialized with $1$. This variable represents the amount a client is willing to pay for being connected to a facility and for opening this facility in the current phase (\ie, the $\alpha_j$ variable of the dual program). Each \emph{not connected} client $j$ increases this variable in every phase by multiplying it with $(1+\epsilon)$ where $\epsilon$ is an arbitrary small constant greater than $0$ (fixed at the start of our algorithm). After $\alpha_j$ is increased, all clients send their current $\alpha_j$ to all facilities. 

\begin{algorithm}[ht!]
\caption{\textsc{Executed by facility $i$}}
\label{alg:facility}
\begin{algorithmic}[2]
\STATE $\forall j \in C: \quad  \receive[\alpha_j,status(j)]$
\IF {($status(i) = \closed$)}
	\STATE $U := \{j|j \in C \wedge status(j) = \notconnected \}$
	\STATE Compute $coveredCost := \sum_{j \in U} \max (0, \alpha_j - c_{ij})$%
	\IF {($f_i \leq coveredCost$)}
			\STATE $status(i) :=$ \paid \\
	\ENDIF
\ENDIF
\STATE Execute Algorithm 3 until there are no \emph{currently-paid} facilities left
\end{algorithmic}
\end{algorithm}

Let $U$ be the set of all \emph{not connected} clients and $\beta_{ij} := \max(\alpha_j-c_{ij}, 0)$. We say that client $j$ \emph{contributes} to facility $i$, if $\beta_{ij} >0$ (the client pays $c_{ij}$ for being connected to and $\beta_{ij}$ for the opening of facility $i$). Upon receiving the $\alpha_j$ variables, every \emph{closed} facility $i$ computes $\sum_{j \in U} \beta_{ij}$, which represents the amount \emph{not connected} clients are willing to contribute to the payment of opening facility $i$. If the contribution to the opening of facility $i$ (i.e. $\sum_{j \in U} \beta_{ij}$) reaches a point such that $f_i \leq \sum_{j \in U} \beta_{ij}$, it changes its status to \emph{currently-paid}.

\begin{algorithm}[ht!]
\caption{\textsc{Executed by facility $i$ }}
\label{alg:facility2}
\begin{algorithmic}[2]
\STATE $R_i :=$ uniformly distributed random number in $[0,1]$
\STATE $\forall j \in C: \quad \send[R_i, status(i)]$\\
\STATE $\forall j \in C: \quad \receive[S_j, status(j)]$
\STATE Define $T_i := \max_I(S_j)$ where $I := \{j| j \in C, status(j) = \notconnected, \alpha_j \geq c_{ij} \}$\\
\STATE $thisRoundOpened_i := false$
\IF {($T_i = R_i \wedge status(i) = \paid $)}
	\STATE $status(i) := \open$
	\STATE $thisRoundOpened_i := true$	
\ENDIF
\STATE $\forall j \in C: \quad \send[status(i), thisRoundOpened_i]$\\
\STATE $\forall j \in C: \quad \receive[status(j)]$\\
\STATE $I' := \{j| j \in C, status(j) = \notconnected, \alpha_j \geq c_{ij} \}$
\IF{$(status(i) = \paid \wedge \sum_{j \in I'} \max (0, \alpha_j - c_{ij}) < f_i)$}
	\STATE $status(i) = \closed$
\ENDIF
\end{algorithmic}
\end{algorithm}

Opening all the \emph{currently-paid} facilities could result in a very bad approximation, since it is possible that a single \emph{not connected} client $j$ is contributing to an arbitrary number of \emph{currently-paid} facilities (i.e. $\beta_{i,j},\beta_{i',j} > 0, i\neq i'$) and opening just one of these facilities would be sufficient. Thus, a selection procedure is needed to determine which \emph{currently-paid} facilities are going to be permanently opened (change their status to \emph{open}) and which of them will remain closed in this phase (changing their status from \emph{currently-paid} to \emph{closed}). Furthermore, it is required to determine which client is connected to which facility. All of this is achieved by the interaction of Algorithm 3 (from the point of view of facility $i$) and Algorithm 4 (from the point of view of client $j$). In each iteration of Algorithm 3 and 4, a subset of \emph{currently-paid} facilities is chosen to be opened. The development and analysis of these two algorithms is the main contribution of our paper, as they are responsible for selecting the facilities (which are going to be opened) in a fast and efficient way. The algorithms operate by generating a random number at each facility and then opening a facility, if its number is the highest one in its neighborhood (the neighborhood of a facility $i$ consists of all facilities $i'$, such that there is at least one client that contributes to $i$ and $i'$). Only $\bigO{\log(n)}$ bits are used to represent the generated random number, which is sufficient to guarantee that, with high probability, no two facilities generate the same number. The main objective of Algorithm 3 and 4 is to establish the following fact:

\begin{fact}
 In every phase, after the iterative execution of Algorithm 3 and 4 terminates, the following is true for all facilities:
\begin{enumerate}
	\item[(i)] No facility is \emph{currently-paid}.
	\item[(ii)] If facility $i$ is \emph{open}, then $\sum_{j \in U'} \beta_{ij} \geq f_i$, where $U' = \{j|\varphi(j) = i\}$.
	\item[(iii)] If facility $i$ is \emph{closed}, then $\sum_{j \in U} \beta_{ij} < f_i$, where $U = \{j| state(j) = \notconnected\}$.
\end{enumerate}
\end{fact}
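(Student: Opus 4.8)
The plan is to handle the three items separately, since they rest on different mechanisms. Item~(i) is essentially definitional: the loop in Algorithms~1 and~2 is run ``until there are no currently-paid facilities left,'' so the instant the iterative execution terminates, no facility can be currently-paid. Here I simply invoke the termination established in Section~3; the fact is only asserted about the state reached after termination.

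For item~(iii) I would argue by monotonicity of the coverage. Within a single phase every $\alpha_j$ is fixed, so each $\beta_{ij}=\max(\alpha_j-c_{ij},0)$ is constant, and the not-connected set $U$ can only shrink as the iterations of Algorithm~3 proceed; hence $\sum_{j\in U}\beta_{ij}$ is nonincreasing throughout the phase. The only closed-to-currently-paid transition happens at the start of a phase in Algorithm~2, and exactly when $f_i\le\sum_{j\in U}\beta_{ij}$. So a facility that is closed at the end of the phase falls into one of two cases: either it never became currently-paid, in which case it failed this test at the phase start and monotonicity keeps $\sum_{j\in U}\beta_{ij}<f_i$ afterwards; or it became currently-paid and later reverted to closed through the test in the last lines of Algorithm~3, which fires precisely when $\sum_{j\in I'}\beta_{ij}<f_i$ for the set $I'$ of not-connected contributing clients. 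Since non-contributing not-connected clients add $\beta_{ij}=0$, this sum equals $\sum_{j\in U}\beta_{ij}$, and monotonicity again preserves the strict inequality until the phase ends.

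For item~(ii) the crucial ingredient is a selection (independence) property: in a single iteration, no two facilities that share a not-connected contributing client both open. A facility $i$ opens only when $T_i=R_i$, where $T_i=\max_{j\in I}S_j$ over $i$'s not-connected contributing clients, and each $S_j$ equals the largest $R_{i'}$ among the currently-paid facilities to which $j$ contributes. Unfolding these definitions shows $T_i$ is the maximum of $R_i$ and all $R_{i'}$ over the currently-paid neighbors $i'$ of $i$ in the conflict graph, so $T_i=R_i$ means $R_i$ attains this maximum; distinct random values (guaranteed with high probability, or deterministically by breaking ties with node IDs) then rule out two neighbors attaining it simultaneously. Granting this, when $i$ opens, every not-connected client $j$ contributing to $i$ connects to $i$: any other facility $j$ contributes to is, by definition of the conflict graph, a neighbor of $i$ and so does not open in the same iteration, leaving $i$ as the unique newly-opened facility available to $j$. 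Hence $U'=\{j:\varphi(j)=i\}$ is exactly the set of not-connected contributing clients of $i$ at the start of that iteration. At that moment $i$ was currently-paid, which (because it had just passed the test of Algorithm~2, or because it survived the reversion test of Algorithm~3 with an unchanged not-connected set) forces $\sum\beta_{ij}\ge f_i$ over precisely those clients; therefore $\sum_{j\in U'}\beta_{ij}\ge f_i$. Finally, a client freezes its $\alpha_j$ once connected, since only not-connected clients multiply by $(1+\epsilon)$ in Algorithm~1, so this inequality persists into every later phase and the claim holds whenever evaluated.

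I expect the main obstacle to be the rigorous justification of the independence property and its interaction with the conflict graph: one must verify that the two-round exchange of $R_i$ and $S_j$ in Algorithms~3 and~4 genuinely computes the neighborhood maximum, and that the contribution relation is symmetric enough that a not-connected client of a newly-opened facility cannot be simultaneously claimed by a non-neighbor. Handling the tie case (equal random numbers) is the delicate point, as it is exactly what separates an in-expectation guarantee from the worst-case guarantee claimed in the abstract.
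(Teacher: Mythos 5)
Your proposal is correct, but it is worth noting that the paper never writes down a proof of Fact~1 at all: the fact is stated as the ``main objective'' of Algorithms~3 and~4, justified only by the algorithm listings and the surrounding prose (including the remark that the $\bigO{\log(n)}$-bit random numbers are distinct with high probability). What you supply is essentially the intended but omitted argument, correctly decomposed into the three distinct mechanisms: (i) is indeed just the loop's termination condition; your treatment of (iii) via monotonicity of $\sum_{j\in U}\beta_{ij}$ within a phase (fixed $\alpha_j$, shrinking $U$), the once-per-phase \emph{closed}-to-\emph{currently-paid} transition in Algorithm~2, and the reversion test in the last lines of Algorithm~3 (with the observation that summing over $I'$ equals summing over $U$ since non-contributing clients add zero) is exactly right; and for (ii) your unfolding of the $R_i$/$S_j$ exchange into ``$T_i=R_i$ iff $R_i$ is maximal over the \emph{currently-paid} neighborhood in the conflict graph'' correctly establishes the independence property, including the timing point that no client connects between a facility surviving the reversion test and its opening decision in the next iteration, so the payment test still holds over precisely the clients that then constitute $U'$. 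Your closing caveat is also the genuinely delicate point and is sharper than the paper itself: with equal random values two neighboring facilities can both satisfy $T_i=R_i$ and open, a shared client then has $|J'|=2$, and (ii) can fail --- so as written the paper's mechanism establishes Fact~1 only with high probability, whereas your suggestion of breaking ties deterministically by node IDs (which fits within the $\bigO{\log(n)}$-bit messages) upgrades it to the worst-case guarantee the abstract claims. The one cosmetic slip is the claim that $U'$ is ``exactly'' the not-connected contributing clients at the opening iteration: since Algorithm~4 uses the weak test $\alpha_j\geq c_{ij}$, clients with $\beta_{ij}=0$ may also connect to $i$; this only enlarges $U'$ and adds zero to the sum, so the inequality $\sum_{j\in U'}\beta_{ij}\geq f_i$ is unaffected.
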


Intuitively, (ii) means that a facility is only opened, if all clients connected to it can (together) pay for its opening, and (iii) means that, after a phase ended, there is no \emph{closed} facility to which clients contribute enough to pay for its opening. The fact that a \emph{closed} facility with  $\sum_{j \in U} \beta_{ij} \geq f_i$ cannot enter the next phase, as implied by (iii), is of importance. It ensures that the desired approximation factor can be guaranteed, which is curcial for (the later introduced) Lemma 2.


\begin{algorithm}[ht!]
\caption{\textsc{Executed by client $j$ }}
\label{alg:client2}
\begin{algorithmic}[1]
\STATE $\forall i \in F: \quad \receive[R_i, status(i)]$\\
\STATE Define $S_j := \max_J(R_i)$ where $J := \{i| i \in F, status(i) = \paid, \alpha_j \geq c_{ij} \}$\\
\STATE $\forall i \in F: \quad \send[S_j, status(j)]$\\
\STATE $\forall i \in F: \quad \receive[status(i),thisRoundOpened_i]$\\
\STATE $J' := \{i| i \in F, status(i) = \open, \alpha_j \geq c_{ij}, thisRoundOpened_i = true \}$\\
\IF{$(status(j) = \notconnected \wedge J' \neq \emptyset)$}
	\STATE $status(j) =$ \connected
	\COMMENT{There is exactly $1$ element in $J'$, see Algorithm 3}
	\STATE $\varphi (j) = i$ $( i \in J' )$
\ENDIF
\STATE $\forall i \in F: \quad \send[status(j)]$\\	

\end{algorithmic}
\end{algorithm}

 Note that after the algorithm's termination the cost of the solution computed by our algorithm lies in the interval $[\sum_{j \in C} \alpha_j/(1+\epsilon),\sum_{j \in C} \alpha_j]$. This is due to the following facts: Each client is connected to exactly one facility and this facility is \emph{open}. Since $\beta_{ij} = \max(\alpha_j-c_{ij}, 0)$ a client $j$ contributes to the opening of a facility $i$ only after $\alpha_j$ covered the costs $c_{ij}$ of connecting $i$ to $j$. Also, since we require that only after $f_i \leq \sum_{j \in U} \beta_{ij}$ facility $i$ changes its state to \emph{open}, the opening of $i$ is completely paid for and since $U$ is the set of all \emph{not connected} clients, it is made sure that a client contributes to the opening of at most one facility.

The running time of the entire algorithm is bounded by $\bigO{n^{3/4}\log_{1+\epsilon}^2(n)}$ rounds: The number of phases is  $\bigO{\log_{1+\epsilon}(n)}$, since in the worst case a client has distance $\bigO{poly(n)}$ to its nearest facility and this facility has opening costs of $\bigO{poly(n)}$ (the $c_{ij}$ and $f_i$ are limited by $\bigO{\log(n)}$ bits). Thus, increasing the $\alpha_j \geq 1$ value iteratively by multiplying it with $(1+\epsilon)$, where $\epsilon > 0$, results in $\bigO{\log_{1+\epsilon}(n)}$ rounds. Later, we will show that with high probability Algorithm 3 and 4 terminate after $\bigO{n^{3/4}\log(n)}$ rounds. The value chosen for $\epsilon$ not only determines the runtime, but also the approximation factor, which is $1.861(1+\epsilon)^2$.

\paragraph{Analysis of the approximation factor and runtime.} Intuitively, the \greedy algorithm can be seen as continuously and simultaneously increasing the $\alpha_j$ values. This allows \greedy to open a facility $i$ in the exact moment it is paid for (i.e., when $\sum_{j \in U} \beta_{ij} = f_i$); a property which is essential for proving its approximation factor. In contrast, our algorithm increases the $\alpha_j$ values in discrete steps and thus, in general, it may occur that $\sum_{j \in U} \beta_{ij} \geq f_i$. The goal of this section is to show that -- even though our algorithm may yield a very different solution from the sequential algorithm -- the costs of the optimal and our solution do not differ too much. Namely, we show that our  approximation factor is at most $(1+\varepsilon)^2$ times greater than the approximation factor of \greedy. Our proof uses similar arguments as, and is based on, the one of Jain \et, but it turns out that in order to apply the results of Jain \et, we need to get a deeper insight into the structural properties of the linear programs involved.

The $\alpha_j$ and $\beta_{ij}$ values computed by our algorithm can be interpreted as a solution to the dual program of the facility location program presented in the introduction. However, they do not necessarily represent a feasible solution: our algorithm opens a facility when $\sum_{j \in U} \beta_{ij}	\geq f_i$ (summing over all \emph{not connected} clients) and not, as required by constraint (4), over all the clients. If we choose $\gamma$ such that $\alpha^*_j \coloneqq \frac{\alpha_j}{\gamma}$ and $\beta^*_{ij} \coloneqq \alpha^*_j - c_{ij}$ constitute a feasible dual solution, the Duality Theorem yields that our solution is by a factor of at most $\gamma$ more expensive than an optimal solution. For their original, sequential algorithm Jain \et gave an upper bound of $1.861$ for $\gamma$ \cite{FLgreedy}. They denoted the used technique as dual fitting with a factor-revealing LP. We will construct such a factor revealing program (see \FRDist) for our algorithm  and establish a structural connection between it and the LP by Jain \et (see \FRJain) for their sequential algorithm (\greedy). 

To this end, consider a single facility $i$ and assume, that \WLOG $\alpha_j \geq \gamma c_{ij}$ holds only for the first $k$ clients and that $\alpha_1 \leq \alpha_2 \leq \ldots \leq \alpha_k$. The next two lemmas express the constraints imposed by both the algorithm and the facility location problem. They resemble similar lemmas to those used in the proof for the sequential algorithm (cf.~\cite{FLgreedy}) (their proofs can be found in the appendix).

\begin{lemma}\label{lem:metric_lp_property}
	For any two clients $j$, $j'$ and a facility $i$, we have $\frac{\alpha_j}{(1+\epsilon)} \leq \alpha_{j'} + c_{ij'} + c_{ij}$.
\end{lemma}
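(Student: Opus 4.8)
The plan is to prove the inequality by a case distinction on the relative magnitudes of $\alpha_j$ and $\alpha_{j'}$, mirroring the corresponding metric lemma of Jain \et but inserting the factor $(1+\epsilon)$ to absorb the fact that our algorithm raises the dual variables in discrete $(1+\epsilon)$-steps rather than continuously. First I would dispose of the easy case $\alpha_j \le (1+\epsilon)\,\alpha_{j'}$: here $\frac{\alpha_j}{1+\epsilon} \le \alpha_{j'} \le \alpha_{j'} + c_{ij'} + c_{ij}$ follows at once from $c_{ij'},c_{ij} \ge 0$, so nothing is left to prove. Hence I may assume $\alpha_j > (1+\epsilon)\,\alpha_{j'}$. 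Since every not-connected client multiplies its $\alpha$-value by the same factor $(1+\epsilon)$ in each phase and all clients start at $1$, this inequality forces client $j$ to become connected strictly later (indeed by at least two phases) than client $j'$.

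For the nontrivial case I would track the facility $i'' \coloneqq \varphi(j')$ to which $j'$ is finally assigned. Because $j'$ connects to $i''$, the facility $i''$ is open from the phase in which $j'$ connects onwards, and the connection condition of Algorithm~4 guarantees $\alpha_{j'} \ge c_{i''j'}$. The heart of the argument is then the same observation used by Jain \et: once the open facility $i''$ becomes affordable to the still-unconnected client $j$, client $j$ gets connected, so its value cannot overshoot the affordability threshold $c_{i''j}$ by more than a single $(1+\epsilon)$-step. Formally I would argue that in the phase immediately preceding $j$'s connection its value $\alpha_j/(1+\epsilon)$ is still below $c_{i''j}$, giving $\frac{\alpha_j}{1+\epsilon} \le c_{i''j}$. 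Applying the triangle inequality in the bipartite metric along the path $i''\!-\!j'\!-\!i\!-\!j$, namely $c_{i''j} \le c_{i''j'} + c_{ij'} + c_{ij}$, and then $c_{i''j'} \le \alpha_{j'}$, yields $\frac{\alpha_j}{1+\epsilon} \le c_{i''j} \le \alpha_{j'} + c_{ij'} + c_{ij}$, which is exactly the claim.

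The step I expect to be the main obstacle is justifying this ``connect as soon as affordable'' property in the discrete, distributed setting, since Algorithm~4 lets a client attach only to facilities that open in the very iteration under consideration (the flag \textit{thisRoundOpened}$_i$ being set), whereas $i''$ may have opened in an earlier phase. I would therefore have to show that an unconnected client can never afford an already-open facility by more than one phase without itself getting connected --- either to $i''$ in the phase $i''$ opened (which is ruled out here because $j$ connects strictly later) or to some substitute facility that the selection mechanism opens in the meantime. Reconciling this with the phase structure, and checking that the resulting slack is at most one multiplicative $(1+\epsilon)$-step (which is precisely what the factor $(1+\epsilon)$ on the left-hand side is there to cover), is the delicate part; the monotonicity of the $\alpha$-values together with Fact~1\,(iii) (no \emph{closed} facility is overpaid at the end of a phase) are the tools I would use to control it.
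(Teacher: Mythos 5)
Your proposal is essentially the paper's own proof: the same case split (dispose of $\alpha_j \leq (1+\epsilon)\alpha_{j'}$ trivially, then assume the opposite), the same key bound $\frac{\alpha_j}{1+\epsilon} \leq c_{i'j}$ for the facility $i'$ serving $j'$, obtained from the observation that $j$ could not yet afford $i'$ when it opened and cannot overshoot the threshold by more than one discrete $(1+\epsilon)$-step, followed by the identical triangle-inequality chain $c_{i'j} \leq c_{i'j'} + c_{ij'} + c_{ij} \leq \alpha_{j'} + c_{ij'} + c_{ij}$. The ``delicate'' step you flag --- that an unconnected client connects as soon as an already-\emph{open} facility becomes affordable, even though Algorithm~4 as written only attaches clients to facilities with $thisRoundOpened_i = true$ --- is exactly the assertion the paper itself makes without further justification (``once the contribution of $j$ reaches $c_{i'j}$, client $j$ will be connected to $i'$''), so your attempt matches the published argument and, if anything, is more candid about this point.
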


\begin{lemma}\label{lem:facilities_not_overpaied_property}
	For every client $j$ and facility $i$, the execution of the algorithm guarantees that $\sum_{l=j}^{k} \max(\alpha_j - (1+\epsilon)c_{il}, 0)\leq (1+\epsilon)f_i$. 
\end{lemma}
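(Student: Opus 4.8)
The plan is to isolate, for the fixed facility $i$ and client $j$, the single phase in which the value $\alpha_j$ gets frozen, and to read the inequality off the algorithm's local stopping rule (Fact~1) evaluated one phase earlier. Two elementary features of the discrete dynamics drive everything. Since every client starts at $\alpha=1$ and multiplies by $(1+\epsilon)$ in each phase in which it stays \emph{not connected}, in any fixed phase $t$ all \emph{not connected} clients share the common value $(1+\epsilon)^t$, and a client's value freezes in the phase it becomes \emph{connected}. Let $P$ be the phase in which $j$ is connected, so $\alpha_j=(1+\epsilon)^P$. By the ordering $\alpha_j\le\dots\le\alpha_k$, each $l\in\{j,\dots,k\}$ is connected no earlier than $j$, hence is still \emph{not connected} at the end of phase $P-1$, where all these clients share the value $(1+\epsilon)^{P-1}=\alpha_j/(1+\epsilon)$.

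I would then split on the state of $i$ at the end of phase $P-1$; by Fact~1(i) it is not \emph{currently-paid}, so it is either \emph{closed} or \emph{open}. If $i$ is \emph{closed}, Fact~1(iii) applied to that configuration gives $\sum_{l\in U}\max(\alpha_j/(1+\epsilon)-c_{il},0)<f_i$, where $U$ is the set of \emph{not connected} clients. As $\{j,\dots,k\}\subseteq U$ and all summands are nonnegative, the left-hand side dominates $\sum_{l=j}^{k}\max(\alpha_j/(1+\epsilon)-c_{il},0)$; multiplying by $(1+\epsilon)$ and using $(1+\epsilon)\max(x-c_{il},0)=\max((1+\epsilon)x-(1+\epsilon)c_{il},0)$ with $x=\alpha_j/(1+\epsilon)$ yields exactly the claim. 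If instead $i$ is already \emph{open}, I would appeal to the invariant enforced by the connection step that at the end of a phase no \emph{not connected} client can afford an \emph{open} facility (otherwise it would already have been connected to one). Thus every client still \emph{not connected} at the end of phase $P-1$ satisfies $c_{il}>\alpha_j/(1+\epsilon)$, i.e.\ $(1+\epsilon)c_{il}>\alpha_j$, so $\max(\alpha_j-(1+\epsilon)c_{il},0)=0$ for every $l\in\{j,\dots,k\}$ and the sum is $0\le(1+\epsilon)f_i$. The boundary case $P=0$ is identical once ``end of phase $P-1$'' is read as the initial configuration (all facilities \emph{closed}, all $\alpha=1$) and the rule turning a \emph{closed} facility \emph{currently-paid} is used in place of Fact~1(iii).

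I expect the \emph{open}-facility case to be the genuine obstacle, as it is exactly the phenomenon absent from the sequential analysis of Jain \et: there an \emph{open} facility would instantly have absorbed $j$ together with every cheaper unconnected client, so the case is vacuous. In the parallel setting this must be argued separately, and it is what both forces the $(1+\epsilon)$ slack and ties the statement back to the selection mechanism of Section~3. The remaining ingredients --- the common-value observation and the manipulation of $\max$ under scaling by $(1+\epsilon)$ --- are routine, as is checking $\{j,\dots,k\}\subseteq U$.
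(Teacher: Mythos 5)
Your proof is correct and is essentially the paper's own argument run forwards instead of by contradiction: the paper assumes the inequality fails, deduces that $i$ was fully paid for in a phase $s<t$ (one phase before $j$ connects, where all of $j,\dots,k$ are still unconnected and share the value $\alpha_j/(1+\epsilon)$), and uses Fact 1 to conclude that some client $l\in\{j,\dots,k\}$ with a positive term was connected to $i$ earlier with $\alpha_l<\alpha_j$, contradicting the ordering --- which is precisely the negation of your two cases (closed $\Rightarrow$ Fact 1(iii) bounds the sum; open $\Rightarrow$ every still-unconnected client has $c_{il}>\alpha_j/(1+\epsilon)$, so all terms vanish). The open-facility invariant you correctly flag as the crux is the same one the paper itself relies on in its proof of Lemma 1 (``once the contribution of $j$ reaches $c_{i'j}$, client $j$ will be connected to $i'$''), so your treatment matches the paper's, and is if anything more explicit about the synchronized $\alpha$-values and the base case.
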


Our algorithm ensures that the constraints stated by the two lemmas above are met. Thus, we now want to find the minimum value for $\gamma$ such that $\sum_{j = 1}^{k} \max(\frac{\alpha_j}{\gamma} - c_{ij}, 0) \leq f_i$ holds, without violating the lemmas' statements. Equivalently, we want to maximize the ratio $(\sum_{j = 1}^{k} \alpha_j)/(f_i + \sum_{j=1}^{k} c_{ij})$ (again, without violating the constraints). This formulation yields the following family of linear\footnote{Note that this formulation is not a \emph{linear} program, but can be transformed easily into one.} programs, referred to as the factor-revealing LP.

		%
%
		%

\[
	\begin{array}{rclll}
		\multicolumn{5}{l}{ \textrm{{\bf  FacRev-Jain:}}}\\
		 \mbox{maximize}		 & \multicolumn{4}{l}{  z_k = \frac{\sum_{j=1}^{k} \alpha_j} {f_i + \sum_{j = 1}^{k}  c_{ij}}}\\
		
		 \mbox{subject to} \quad \displaystyle \alpha_j	&\leq	& \alpha_{j+1} 		& \forall j \in \{1, \ldots, k-1\}								& (1^*) \\
					 \displaystyle \alpha_j				&\leq	& \alpha_l + c_{ij} + c_{il}		& \forall j,l \in \{1, \ldots, k\} 	& (2^*)\\
					  \sum_{l=j}^{k} \max(\alpha_j-c_{il},0)	&\leq	&	f_i			& \forall j \in \{1, \ldots, k\}& (3^*) 	\\
					 \displaystyle \alpha_j,c_{ij}, f_i	&\geq&					0 & \forall j  \in \{1, \ldots, k\}								& (4^*)\\
	\end{array} 
\]

\[
	\begin{array}{rclll}
		\multicolumn{5}{l}{ \textrm{{\bf  FacRev-Dist:}}}\\
		 \mbox{maximize}		 & \multicolumn{4}{l}{  z_k = \frac{\sum_{j=1}^{k} \alpha_j} {f_i + \sum_{j = 1}^{k}  c_{ij}}}\\
		
		 \mbox{subject to} \quad \displaystyle \alpha_j	&\leq	& \alpha_{j+1} 		& \forall j \in \{1, \ldots, k-1\}								& (1^{**}) \\
					 \displaystyle \alpha_j				&\leq	& (1+\epsilon)(\alpha_l + c_{ij} + c_{il})		& \forall j,l \in \{1, \ldots, k\} 	& (2^{**})\\
					  \sum_{l=j}^{k} \max(\alpha_j-(1+\epsilon)c_{il},0)	&\leq	&	(1+\epsilon)f_i			& \forall j \in \{1, \ldots, k\}& (3^{**}) 	\\
					 \displaystyle \alpha_j,c_{ij}, f_i	&\geq&					0 & \forall j  \in \{1, \ldots, k\}								& (4^{**})\\
	\end{array} 
\]

Note that constraints $(1^*)$ \resp $(1^{**})$ are only used to sort the $\alpha_j$ values in order to ease the formulation of constraints $(3^*)$ \resp $(3^{**})$. The following theorem quantifies the correlation between our (\FRDist) and the original factor-revealing LPs (\FRJain).
\begin{theorem}
	For any $k \in \mathbb N$, the optimal solution to \FRDist is bounded by $(1+\epsilon)^2 1.861$, and thus the approximation factor of our algorithm is also $(1+\epsilon)^2 1.861$.
\end{theorem}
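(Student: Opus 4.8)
The plan is to prove the quantitative statement $\mathrm{OPT}(\FRDist)\le (1+\epsilon)^2\,\mathrm{OPT}(\FRJain)$ and then invoke the known bound $\mathrm{OPT}(\FRJain)\le 1.861$ of Jain \et. The second, ``and thus'' half of the theorem is then immediate from the discussion preceding the LPs: the smallest $\gamma$ for which $\alpha^*_j=\alpha_j/\gamma$, $\beta^*_{ij}=\alpha^*_j-c_{ij}$ is dual-feasible is exactly the supremum over $k$ of the objective of the factor-revealing program, so bounding $\mathrm{OPT}(\FRDist)$ bounds the approximation factor via weak duality. Hence the whole theorem reduces to comparing the two factor-revealing programs, and the natural way to do this is to map an arbitrary feasible point of \FRDist to a feasible point of \FRJain while losing at most a factor $(1+\epsilon)^2$ in the objective.

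Concretely, given any feasible $(\alpha,c,f)$ for \FRDist I would define the candidate \FRJain point
\[
  \tilde\alpha_j:=\frac{\alpha_j}{1+\epsilon},\qquad
  \tilde c_{ij}:=c_{ij}+\tfrac{\epsilon}{2}\,\tilde\alpha_j,\qquad
  \tilde f_i:=f_i .
\]
Constraint $(1^*)$ is inherited from $(1^{**})$ since scaling preserves the ordering. Constraint $(3^*)$ is the easy direction: because $\tilde c_{il}\ge c_{il}$ and $\max(\tilde\alpha_j-c_{il},0)=\tfrac{1}{1+\epsilon}\max(\alpha_j-(1+\epsilon)c_{il},0)$, the left-hand side of $(3^*)$ is at most $\tfrac{1}{1+\epsilon}\sum_{l=j}^{k}\max(\alpha_j-(1+\epsilon)c_{il},0)\le f_i=\tilde f_i$ by $(3^{**})$. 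The objective becomes, writing $A=\sum_j\alpha_j$ and $D=f_i+\sum_j c_{ij}$,
\[
  \frac{\sum_j\tilde\alpha_j}{\tilde f_i+\sum_j\tilde c_{ij}}
  =\frac{A/(1+\epsilon)}{D+\tfrac{\epsilon}{2(1+\epsilon)}A}
  =\frac{z_k}{(1+\epsilon)+\tfrac{\epsilon}{2}z_k},
\]
where $z_k=A/D$ is the \FRDist objective.

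The hard part is constraint $(2^*)$, and I expect it to be the only real obstacle. The difficulty is structural: in $(2^{**})$ the relaxing factor $(1+\epsilon)$ multiplies the \emph{$\alpha_l$ term} as well, and a pure rescaling of $\alpha,c,f$ can never remove the resulting surplus $\epsilon\alpha_l$ — the slack it would need in $c_{ij}+c_{il}$ is proportional to $\alpha_l$, not to $c$, and pushing $c$ up to compensate inflates the denominator and moves the bound the wrong way. This is exactly why the additive correction $\tfrac{\epsilon}{2}\tilde\alpha_j$ is built into $\tilde c_{ij}$. For the nontrivial case $j>l$ (the case $j\le l$ being immediate from $\tilde\alpha_j\le\tilde\alpha_l$ and $\tilde c\ge 0$) one has $\tilde c_{ij}+\tilde c_{il}\ge c_{ij}+c_{il}+\epsilon\tilde\alpha_l$, while Lemma~\ref{lem:metric_lp_property}, i.e.\ $(2^{**})$, gives $\tilde\alpha_j\le(1+\epsilon)\tilde\alpha_l+c_{ij}+c_{il}$, so that $\tilde\alpha_j-\tilde\alpha_l\le \epsilon\tilde\alpha_l+c_{ij}+c_{il}\le \tilde c_{ij}+\tilde c_{il}$, which is precisely $(2^*)$.

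Finally I would close the argument with the bookkeeping step. Since the constructed point is \FRJain-feasible, its objective is at most $1.861$, i.e.\ $z_k/((1+\epsilon)+\tfrac{\epsilon}{2}z_k)\le 1.861$; solving for $z_k$ gives $z_k\le 1.861(1+\epsilon)/(1-\tfrac{1.861}{2}\epsilon)$, and a one-line computation shows $(1+\epsilon)/(1-\tfrac{1.861}{2}\epsilon)\le(1+\epsilon)^2$ for the small constant $\epsilon$ in question, yielding $z_k\le(1+\epsilon)^2\,1.861$ uniformly in $k$. The main technical care is thus concentrated entirely in the treatment of $(2^*)$: choosing the correction coefficient ($\tfrac{\epsilon}{2}$ rather than $\epsilon$) small enough that the denominator grows by less than a factor $(1+\epsilon)$ yet large enough to absorb the $\epsilon\alpha_l$ surplus. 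An alternative route, should one prefer to avoid the explicit map, is to take whatever nonnegative combination of $(1^*),(2^*),(3^*)$ certifies Jain's $1.861$ bound and apply the identical multipliers to $(1^{**}),(2^{**}),(3^{**})$, tracking how each relaxed constraint contributes one factor $(1+\epsilon)$; but I expect the explicit transformation above to be the cleaner ``structural connection'' the statement alludes to.
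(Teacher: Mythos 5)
Your proposal is correct (with one caveat noted below) but takes a genuinely different route from the paper's. The paper fixes the instance $(c_{ij},f_i)$ and compares the \emph{optimal} solutions of the two factor-revealing programs componentwise: Proposition~1 shows that in an optimal \FRJain solution every $\alpha_j$ is pinned either by a tight constraint of type $(3^*)$ or by a chain of tight triangle constraints $(2^*)$ terminating in one, and Proposition~2 leverages this binding structure to conclude $\alpha'_j \leq (1+\epsilon)^2\alpha_j$; since the denominator $f_i+\sum_j c_{ij}$ is common to both programs, the objective ratio is at most $(1+\epsilon)^2$. You instead map an \emph{arbitrary} feasible point of \FRDist to a feasible point of \FRJain on a \emph{perturbed} instance via $\tilde\alpha_j=\alpha_j/(1+\epsilon)$ and $\tilde c_{ij}=c_{ij}+\frac{\epsilon}{2}\tilde\alpha_j$, and your verification is sound: $(3^*)$ follows from $\max(\tilde\alpha_j-c_{il},0)=\frac{1}{1+\epsilon}\max(\alpha_j-(1+\epsilon)c_{il},0)$ together with $\tilde c\geq c$, and $(2^*)$ works because for $j>l$ the ordering gives $\frac{\epsilon}{2}(\tilde\alpha_j+\tilde\alpha_l)\geq\epsilon\tilde\alpha_l$, which absorbs exactly the surplus $\epsilon\tilde\alpha_l$ left over from $(2^{**})$ — the structural obstacle you correctly identify as the crux. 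Your route buys robustness: it entirely avoids the paper's tight-constraint and chain analysis of optimal solutions, which is the most delicate part of the paper's argument (in particular, case~(ii) of Proposition~2 applies case~(i) to the \FRDist optimum $\alpha'_l$ without fully justifying that it inherits the binding structure of $\alpha_l$), and it needs only feasibility, not optimality. What it costs is a quantitative threshold: your final bound is $z_k\leq 1.861(1+\epsilon)/\bigl(1-\frac{1.861}{2}\epsilon\bigr)$, which is at most $1.861(1+\epsilon)^2$ only when $(1+\epsilon)\bigl(1-\frac{1.861}{2}\epsilon\bigr)\geq 1$, i.e., roughly $\epsilon\leq 0.074$, whereas the paper's componentwise argument yields the $(1+\epsilon)^2$ factor for every $\epsilon>0$. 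Since the paper treats $\epsilon$ as an arbitrarily small positive constant this restriction is harmless in context, but you should state the threshold explicitly rather than leaving it as ``a one-line computation'' — as written, the claim is false for, say, $\epsilon=0.5$.
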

\begin{proof}

Jain \et showed that $z_k$ can be at most $1.861$ in \FRJain. In our modified version $z_k$ can be at most $(1+\epsilon)^2 1.861$: Fix a problem instance by setting the $c_{ij}$ variables and the $f_i$ variable to arbitrary values $\geq 0$ and consider the following two propositions.
\begin{proposition}\label{propositionLP1}  In an optimal solution $\alpha := (\alpha_1, \alpha_2, \ldots, \alpha_k)$ to \FRJain, $\forall j \in C$ at least one of the two properties is true:
\begin{enumerate}
	\item[(i)]	$\alpha_j = (f_i + \sum_{q \in Q_j} c_{iq})/|Q_j|$, where $Q_j := \{q| q \in C \wedge \alpha_j - c_{iq} \geq 0 \wedge \alpha_j \leq \alpha_q \}$
	\item[(ii)] $\exists l \in C: \alpha_l = (f_i + \sum_{q \in Q_l} c_{iq})/|Q_l|$ and $\alpha_j = \alpha_l + c_{ji} + c_{il}$ 
\end{enumerate}
\end{proposition}
\begin{proof}
	Note that the term $(f_i + \sum_{q \in Q_j} c_{iq})/|Q_j|$ is fixed. Since we deal with a maximization problem, which is bounded (see constraint $(3^*)$ of \FRJain), each $\alpha_j$ is bounded. Since $\alpha$ represents an optimal solution, for each $\alpha_j$ there must be at least one constraint that bounds $\alpha_j$ and is also tight. If this tight constraint is of the form $\sum_{l=j}^{k} \max(\alpha_j-c_{il},0)	\leq	f_i$, (i) holds.
	
Otherwise, there must be at least one tight constraint of the form $\alpha_j \leq \alpha_{k_1} + c_{ij} + c_{ik_1}$, i.e. $\exists k_1 \in C: \alpha_j = \alpha_{k_1} + c_{ij} + c_{ik_1}$. This argument can be analogously applied to $k_1$: either $\alpha_{k_1} = (f_i + \sum_{q \in Q_{k_1}} c_{iq})/|Q_{k_1}|$, or there exists a $k_2$ such that $\alpha_{k_1} = \alpha_{k_2} + c_{ik_1} + c_{ik_2}$. Like this, we can build a recursion for $\alpha_j = \alpha_{k_1}+c_{ij} +c_{ik_1}$, which terminates as soon as we reach a variable $\alpha_{k_t}$ with $\alpha_{k_t} = (f_i + \sum_{q \in Q_{k_t}} c_{iq})/|Q_{k_t}|$ and thus (i) holds for $k_t$.

Note that for every $\alpha_j$ there must be such a terminating sequence, since constraint $(3^*)$ is the only constraint which gives an absolute upper bound on the variables. Otherwise, the $\alpha_j$ would be unbounded. The recursion yields 
$\alpha_j =  \alpha_{k_t} + c_{ik_1} + c_{ik_1} +c_{ik_2} + c_{ik_2} + \ldots + c_{ik_{t-1}} + c_{ik_t}$, implying $\alpha_j \geq \alpha_{k_t} + c_{ij} + c_{ik_t}$. Since also $\alpha_j \leq \alpha_{k_t} + c_{ij} + c_{ik_t}$ (constraint $(2^*)$), we have $\alpha_j = \alpha_{k_t} + c_{ij} + c_{ik_t}$, proving (ii).  
\end{proof}

\begin{proposition} \label{propositionLP2}
	 Given an optimal solution $\alpha := (\alpha_1, \alpha_2, \ldots, \alpha_k )$ to \FRJain and an optimal solution $\alpha' := (\alpha'_1, \alpha'_2, \ldots, \alpha'_k )$ to \FRDist, we have $\forall j \in C: \alpha'_j \leq (1+\epsilon)^2\alpha_j$.	
\end {proposition}
\begin{proof}
	Since the constraints in \FRDist are a relaxation of the constraints of \FRJain, the solution $\alpha$ is also a feasible solution of \FRDist. Based on Proposition \ref{propositionLP1} we know that to any $\alpha_j$ at least one of the following cases applies:
	
\begin{enumerate} 	
	\item[(i)] $\alpha_j = (f_i + \sum_{q \in Q_j} c_{iq})/|Q_j|$ \ie the constraint $\sum_{l=j}^{k} \max(\alpha_j-c_{il},0)	\leq	f_i$ is tight. Notice that for $\alpha'_j$ satisfying the constraint $\sum_{l=j}^{k} \max(\alpha_j-(1+\epsilon)c_{il},0) \leq	(1+\epsilon)f_i$ we have, due to the set $\{c_{il} | \alpha_j > c_{il}\}$ containing the set $\{c_{il} | \alpha'_j > (1+\epsilon)c_{il}\}$,  $\alpha'_j \leq (1+\epsilon) \alpha_j$.
	
	\item[(ii)] According to Proposition \ref{propositionLP1} we know that $\exists l \in C: \alpha_l = (f_i + \sum_{q \in Q_l} c_{iq}/|Q_l|$ and $\alpha_j = \alpha_l + c_{ji} + c_{il}$. Constraint (2) of \emph{FR-Dist} guarantees that $\alpha'_j \leq (1+\epsilon) (\alpha'_l + c_{ji} + c_{il})$. Since we can apply case (i) for $l$, we have $\alpha'_l \leq (1+\epsilon)\alpha_l$, which yields $\alpha'_j \leq (1+\epsilon)((1+\epsilon)\alpha_l + c_{ji} + c_{il}) \leq (1+\epsilon)^2 (\alpha_l + c_{ji} + c_{il})$.
\end{enumerate}	
\end{proof}

Given any instance of the facility location problem, we know that the optimal solution to the \FRJain LP is bounded by $1.861$ and that, by Proposition \ref{propositionLP2}, the solution to \FRDist for the same instance is at most by a factor of $(1+\epsilon)^2$ larger. Since the problem instance was chosen arbitrarily, the theorem's statement follows.
%
\end{proof}

\section{Facility Selection Process} In this section we are only concerned with a single \emph{phase} of our algorithm (\ie the loop in which Algorithm 3 and 4 are invoked and which establishes Fact $1$). We show that it terminates after $\bigO{n^{3/4}\log(n)}$ rounds with high probability.

To simplify notation, $F$ and $C$ do not refer to all facilities \resp all clients (as was the case before). Instead, let $F$ represent the set of all facilities that are in the \emph{currently-paid} state and $C$ denote the set of all \emph{not connected} clients contributing to a \emph{currently-paid} facility.  All the other clients and facilities effectively do not take part in the execution of Algorithm 3 and 4 in the considered phase and thus can be ignored. We consider the bipartite Graph $G=(F\cup C,E)$. There is an edge in $E$ between $i \in F$	and $j \in C$ if and only if client $j$ contributes to $i$ (i.e. $\alpha_j - c_{ij} > 0$). Furthermore, we assume that there are no isolated nodes (\ie $\deg(v)\geq1$ for all $v\in F\cup C$). Let $n = \abs{F}+\abs{C}$. By $G_{F}\coloneqq(F,E_{F})$ we denote the $\emph{Facility Graph}$, where there is an edge $\{i,i'\}$ between two facilities $i,i'\in F$ if and only if they share a common client in $G$. We use $\deg(\cdot)$ and $N(\cdot)$ to denote the degree and neighborhood of a node in $G$ respectively. If the neighborhood includes the node itself, we write $N^+(\cdot)$. Similarly, we use $\deg_{F}(\cdot)$, $N_{F}(\cdot)$, and $N_{F}^+(\cdot)$ to denote the corresponding properties in $G_{F}$.

 In each execution of Algorithm 3 and 4 the set $F$ of \emph{currently-paid} facilities shrinks (they change their status to \emph{closed} or \emph{open}). Also, with every facility changing its state from \emph{currently-paid} to \emph{open} clients that are connected to those newly opened facilities are removed from $C$. Removing these clients form $C$ can potentially cause \emph{currently-paid} facilities to change their state to \emph{closed} (they lose contributing clients and their opening costs are not paid for anymore). Since it cannot be guaranteed that a \emph{currently-paid} facility which loses a contributing client is not longer fully paid for, applying a distributed MIS (maximal independent set) algorithm (\eg Luby's MIS algorithm \cite{luby}) on $G_{F}$ does not help to solve our problem: A facility that loses a contributing client (\ie a facility that is not part of the MIS) might still be fully paid for. If such a facility is not opened in the current phase (\ie, (iii) of Fact 1 does not hold), clients contributing to it will raise their $\alpha_j$ values in the next phase, which in turn could violate the constraints of the \FRDist LP. Since shrinking $F$ by removing \emph{currently-paid} facilities that are not paid for only decreases the runtime of Algorithm 3 and 4, we will assume (in order to consider the worst case) that facilities never change their status from \emph{currently-paid} to \emph{closed} as the result of clients changing their state to \emph{connected} (\ie, losing contributing clients). 

\paragraph{A simplified problem description.} In order to alleviate describing the shrinking process of the set of \emph{currently-paid} facilities, we simplify the analysis of Algorithm 3 and 4 by omitting the information about the values of $c_{ij}$ and $f_i$ and simulating the distributed execution of the algorithms by the algorithm \ouralg. Its input is the bipartite graph $G$ and it possesses global knowledge.

\begin{algorithm}[ht!]
\caption{\ouralg}
\label{alg:random}
\begin{algorithmic}[1]
\WHILE{ ($F \neq \emptyset$)}
	\STATE $I := \emptyset$ (\emph{Subsequently, determine $I\subseteq F$ such that $\{i,i'\}\notin E_{F}$ for all $i,i'\in I$)})\\	
	\FORALL{ $ i \in F:$} 
		\STATE Generate a uniformly distributed random value $r_i\in[0,1]$\\
	\IF{($r_i>\max_{i'\in N_{F}(i)}r_{i'}$)}
		\STATE Add $i$ to $I$\\
	\ENDIF
	\ENDFOR	
	\STATE Eliminate $N^+(I)$ from $G$\\
\ENDWHILE
\end{algorithmic}
\end{algorithm}

As the execution of \ouralg progresses, $F$, $C$, $E$, and $n$ change.
We denote these sets and $n$ in round $t$ as $F_t$, $C_t$, $E_t$ and $n_t$.
However, most of the time we consider only the effect of a single iteration of the while loop on the graph.
In these cases, we omit the time parameter and simply use $F$, $C$, $E$, and $n$ to refer to the corresponding values in the considered iteration.

\paragraph{Time required to shrink the bipartite graph.} This paragraph analyzes the time required to remove all nodes from $G$. It is easy to see that a fixed facility $i\in F$ is selected with probability $1/(\deg_{F}(i)+1)$ by this strategy. First, we show that, in expectation, \ouralg removes at least $\abs{E}/\abs{F}$ clients and at least $\max\{\abs{F},\abs{E}/\abs{F}\}$ edges in a single iteration of the while loop.

\begin{lemma}\label{prop:lower_bound_on_expected_eliminated_clients}
The expected number of clients removed by \ouralg in a single iteration of the while loop is at least $\abs{E}/\abs{F}$.
\end{lemma}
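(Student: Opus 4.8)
The plan is to express the number of clients removed in one iteration as a random sum over the selected facilities, and then exploit the fact that the selected set $I$ is independent in the facility graph $G_{F}$. First I would record that eliminating $N^+(I)$ from $G$ removes exactly the clients in $N(I)$: the set $I$ consists only of facilities, and since $G$ is bipartite every neighbor of a facility is a client, so the clients deleted in the iteration are precisely $N(I)$ and their number is $\abs{N(I)}$.

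The key step is the observation that $I$ is, by construction of \ouralg, an independent set in $G_{F}$: if $i,i'\in I$ with $i\neq i'$, then $\{i,i'\}\notin E_{F}$, which by definition of the facility graph means that $i$ and $i'$ share no common client, i.e. $N(i)\cap N(i')=\emptyset$. Hence the client-neighborhoods $\{N(i)\}_{i\in I}$ are pairwise disjoint, and therefore $\abs{N(I)}=\sum_{i\in I}\deg(i)$, where $\deg(i)$ denotes the degree of $i$ in $G$ (its number of adjacent clients). This disjointness is what makes the count exact and lets me avoid any inclusion–exclusion; it is the crux of the argument and the one place where care is genuinely needed.

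Next I would take expectations. Using the fact established just before the lemma that a fixed facility $i$ is selected into $I$ with probability $1/(\deg_{F}(i)+1)$, linearity of expectation gives
\[
	\mathbb{E}\bigl[\abs{N(I)}\bigr]=\sum_{i\in F}\frac{\deg(i)}{\deg_{F}(i)+1}.
\]
Finally I bound each denominator crudely: since a facility can share a client with at most $\abs{F}-1$ other facilities, we have $\deg_{F}(i)+1\leq\abs{F}$, so $\deg(i)/(\deg_{F}(i)+1)\geq\deg(i)/\abs{F}$. Summing over all $i\in F$ and using $\sum_{i\in F}\deg(i)=\abs{E}$ yields $\mathbb{E}[\abs{N(I)}]\geq\abs{E}/\abs{F}$, which is the claim. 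Apart from the disjointness step, the remaining manipulations are routine, so I expect no serious obstacle beyond correctly justifying that no removed client is double-counted.
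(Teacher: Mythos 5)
Your proposal is correct and follows essentially the same route as the paper: the paper defines indicator-style variables $X_i$ equal to $\deg(i)$ when $i\in I$, uses the independence of $I$ in $G_{F}$ (your explicit disjointness observation is exactly what justifies the paper's $X=\sum_{i\in F}X_i$), applies the selection probability $1/(\deg_{F}(i)+1)$ with linearity of expectation, and bounds $\deg_{F}(i)+1\leq\abs{F}$. Your write-up merely makes the no-double-counting step more explicit than the paper does, which is a fair point of care but not a different argument.
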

\begin{proof}
For a facility $i\in F$, let $X_i$ denote a random variable counting the number of clients $j\in {C}$ removed because of $i$ being added to $I$ or not.
That is,
\begin{equation*}
X_i=
\begin{cases}
\deg(i) & \text{if } i\in I\\
0       & \text{otherwise}\enspace.
\end{cases}
\end{equation*}
Since $I$ is an independent set in $G_{F}$, the total number of clients removed in one iteration can be described by the random variable $X\coloneqq\sum_{i\in F}X_i$.
Using the inequality $\deg_{F}(i)+1\leq \abs{F}$, we compute:
\begin{equation*}
E(X)=\sum_{i\in F}E(X_i)=\sum_{i\in F}\frac{\deg(i)}{\deg_{F}(i)+1}\geq\sum_{i\in {F}}\frac{\deg(i)}{\abs{F}}=\frac{\abs{E}}{\abs{F}}\enspace.
\end{equation*}
\end{proof}

\begin{lemma}\label{prop:lower_bound_on_expected_eliminated_edges}
The expected number of edges removed by \ouralg in a single iteration of the while loop is at least $\max\{\abs{F},\abs{E}/\abs{F}\}$.
\end{lemma}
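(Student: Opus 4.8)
The plan is to prove the lower bound by handling the two quantities inside the maximum separately, exploiting the already-established Lemma~\ref{prop:lower_bound_on_expected_eliminated_clients}. Since an edge is removed whenever either of its endpoints is eliminated, and every client that is removed takes all of its incident edges with it, the expected number of removed edges is at least the expected number of removed clients weighted by degree — but more simply, it is at least the expected number of removed \emph{clients}, because each removed client has degree $\geq 1$ (the graph has no isolated nodes). This immediately gives the bound $\abs{E}/\abs{F}$ by reusing Lemma~\ref{prop:lower_bound_on_expected_eliminated_clients}, so the real content is the other branch, namely showing the expectation is also at least $\abs{F}$.

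To get the $\abs{F}$ bound, I would mirror the structure of the proof of Lemma~\ref{prop:lower_bound_on_expected_eliminated_clients}. For each facility $i\in F$ define a random variable counting the edges removed ``because of'' $i$ being selected into $I$. The natural choice is to charge to $i$ exactly its own incident edges, i.e. let $Y_i=\deg(i)$ if $i\in I$ and $Y_i=0$ otherwise; since $I$ is independent in $G_F$, distinct selected facilities share no client and hence no edge, so the total edge count $Y=\sum_{i\in F}Y_i$ does not double-count and is a valid lower bound on the edges actually removed (edges incident to eliminated clients but to no selected facility only help). Then
\[
	E(Y)=\sum_{i\in F}\frac{\deg(i)}{\deg_{F}(i)+1}.
\]
The obstacle is that this is the same expression that gave $\abs{E}/\abs{F}$, so bounding it naively below by $\abs{F}$ fails. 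Instead I would use the sharper observation $\deg(i)\geq \deg_{F}(i)$: every facility adjacent to $i$ in $G_F$ witnesses at least one shared client, and distinct neighbors may share the same client, so in fact $\deg(i)$ counts clients while $\deg_F(i)$ counts facility-neighbors, giving $\deg(i)\geq 1$ at minimum and, crucially, $\deg(i)\geq \deg_F(i)/(\text{something})$ — here I need to be careful.

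**The hard part will be** relating $\deg(i)$ to $\deg_{F}(i)$ correctly, since in the worst case a single shared client can create many facility-neighbors, so $\deg_F(i)$ can exceed $\deg(i)$. I therefore expect the clean route is not $\deg(i)\geq\deg_F(i)$ but rather a direct term-by-term bound: each summand satisfies $\deg(i)/(\deg_F(i)+1)\geq 1$ exactly when $\deg(i)\geq \deg_F(i)+1$, which need not hold. The correct fix is to count edges removed per \emph{selected} facility more generously — every $i\in I$ removes all of $N^+(I)$, so each selected facility eliminates its own $\deg(i)\geq 1$ edges, and since $E(\abs{I})=\sum_{i\in F}1/(\deg_F(i)+1)$ is a lower bound on the expected number of selected facilities, and each contributes at least one removed edge, I obtain $E(Y)\geq E(\abs{I})$; combining with a convexity or counting argument that $E(\abs{I})\geq\abs{F}/(\bar d+1)$ does \emph{not} reach $\abs{F}$. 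Hence the genuine argument must charge each selected facility its full incident-edge count and observe that removing $N^+(I)$ destroys at least $\sum_{i\in I}\deg(i)$ edges; taking expectations and then applying the same inequality $\deg_F(i)+1\leq\abs{F}$ used before, but now reading the sum as $\sum_i \deg(i)/\abs{F}\cdot(\text{correction})$, I would instead directly establish $E(Y)\geq\abs{F}$ by noting every facility $i$ has $\deg(i)\geq\deg_F(i)$ fails, so the honest completion is to split cases on the two bounds and take whichever dominates, which is exactly the $\max$ in the statement; I expect the published proof pushes the $\abs{F}$ side through the identity $\sum_{i\in F}\deg(i)/(\deg_F(i)+1)\geq\sum_{i\in F}1=\abs{F}$ whenever $\deg(i)\geq\deg_F(i)+1$, reducing the whole difficulty to justifying that structural inequality in $G$ versus $G_F$.
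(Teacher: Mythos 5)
Your handling of the $\abs{E}/\abs{F}$ branch is correct and identical to the paper's: Lemma~\ref{prop:lower_bound_on_expected_eliminated_clients} plus the absence of isolated clients gives that bound immediately. The $\abs{F}$ branch, however, is never actually proved. Your charging scheme $Y_i=\deg(i)$ for $i\in I$ is too stingy: it yields $E(Y)=\sum_{i\in F}\deg(i)/(\deg_F(i)+1)$, and as you yourself correctly diagnose, the summands need not be $\geq 1$ because $\deg_F(i)$ can far exceed $\deg(i)$ (one shared client can create many facility-neighbors). Your closing speculation --- that the published proof pushes through the inequality $\deg(i)\geq\deg_F(i)+1$ --- proposes exactly the structural fact you had already observed to be false, so the proposal ends without a valid argument for this branch.

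The missing idea is to charge each selected facility the edges of its \emph{clients}, not its own edges: when $i\in I$, all of $N(i)$ is eliminated, and each eliminated client $j$ takes all $\deg(j)$ of its incident edges with it, so the paper sets $Z_i=\sum_{j\in N(i)}\deg(j)$ for $i\in I$ (and $0$ otherwise). Independence of $I$ in $G_F$ means the sets $N(i)$, $i\in I$, are pairwise disjoint, so $Z=\sum_{i\in F}Z_i$ does not double-count. The structural inequality that then closes the argument is
\begin{equation*}
\deg_F(i)+1\;\leq\;\sum_{j\in N(i)}\deg(j)\enspace,
\end{equation*}
which holds because every facility in $N_F^+(i)$ --- including $i$ itself --- is adjacent in $G$ to at least one client of $N(i)$, hence is counted at least once in the sum. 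With this, every summand of $E(Z)=\sum_{i\in F}\frac{\sum_{j\in N(i)}\deg(j)}{\deg_F(i)+1}$ is at least $1$, giving $E(Z)\geq\abs{F}$. Note the contrast with the inequality $\deg(i)\geq\deg_F(i)+1$ you were tempted by: $\sum_{j\in N(i)}\deg(j)$ counts facility-neighbors \emph{through} clients and so automatically dominates $\deg_F(i)+1$, whereas $\deg(i)$ counts clients and does not.
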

\begin{proof}
First note that, since by Lemma~\ref{prop:lower_bound_on_expected_eliminated_clients} the expected number of clients eliminated is at least $\abs{E}/\abs{F}$ and there are no isolated clients, the expected number of edges removed is at least $\abs{E}/\abs{F}$. It remains to show that the expected number of edges removed is at least $\abs{F}$. For a facility $i\in F$, let us define random variables $Z_i$ counting the number of edges removed because of $i$ being added to $I$ or not.
That is,
\begin{equation*}
Z_i=
\begin{cases}
\sum_{j\in N(i)}\deg(j) & \text{if } i\in I\\
0                       & \text{otherwise}\enspace.
\end{cases}
\end{equation*}
Since $I$ is an independent set in $G_{F}$, the total number of edges removed by strategy \ouralg can be described by the random variable $Z$ defined as $Z\coloneqq\sum_{i\in F}Z_i$.
Using the inequality $\deg_{F}(i)+1\leq \sum_{j\in N(i)}\deg(j)$, we compute:
\begin{equation*}
E(Z)=\sum_{i\in F}E(Z_i)=\sum_{i\in F}\frac{\sum_{j\in N(i)}\deg(j)}{\deg_{F}(i)+1}\geq\sum_{i\in F}\frac{\sum_{j\in N(i)}\deg(j)}{\sum_{j\in N(i)}\deg(j)}=\abs{F}\enspace.
\end{equation*}
\end{proof}

As an corollary, we get that, in expectation, the number of edges is reduced by at least its square root:

\begin{corollary}\label{cor:lower_bound_on_expected_eliminated_edges}
The expected number of edges removed by \ouralg in a single iteration of the while loop is at least $\sqrt{\abs{E}}$.
\end{corollary}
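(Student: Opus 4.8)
The plan is to derive this directly from Lemma~\ref{prop:lower_bound_on_expected_eliminated_edges}, which already establishes that the expected number of edges removed in a single iteration is at least $\max\{\abs{F},\abs{E}/\abs{F}\}$. The corollary is then purely an arithmetic consequence, so I expect the entire argument to fit in a single line; there is no genuine probabilistic content left to handle, and hence no real obstacle beyond spotting the right elementary inequality.

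The key observation is that for any two nonnegative reals $a$ and $b$, the maximum dominates the geometric mean, i.e. $\max\{a,b\}\geq\sqrt{ab}$. (If $a\geq b$, then $a\geq\sqrt{ab}$ since $a^2\geq ab$; the case $b\geq a$ is symmetric.) Applying this with $a=\abs{F}$ and $b=\abs{E}/\abs{F}$, the product $ab=\abs{F}\cdot(\abs{E}/\abs{F})=\abs{E}$ telescopes, so that
\[
\max\Bigl\{\abs{F},\tfrac{\abs{E}}{\abs{F}}\Bigr\}\geq\sqrt{\abs{F}\cdot\tfrac{\abs{E}}{\abs{F}}}=\sqrt{\abs{E}}\enspace.
\]

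Combining this with the bound from Lemma~\ref{prop:lower_bound_on_expected_eliminated_edges}, the expected number of edges removed is at least $\max\{\abs{F},\abs{E}/\abs{F}\}\geq\sqrt{\abs{E}}$, which is exactly the claim. I note as a sanity check that the two regimes balance at $\abs{F}=\sqrt{\abs{E}}$: when there are many facilities the $\abs{F}$ term carries the bound, and when there are few facilities (hence high average degree) the $\abs{E}/\abs{F}$ term does, so the $\sqrt{\abs{E}}$ estimate is the worst case over how the edges are distributed. An alternative to the geometric-mean inequality would be a two-case split on whether $\abs{F}\geq\sqrt{\abs{E}}$ or not, but the $\max\{a,b\}\geq\sqrt{ab}$ formulation is cleaner and avoids the case analysis entirely.
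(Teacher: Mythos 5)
Your proof is correct and matches the paper's argument exactly: the paper likewise derives the corollary from Lemma~\ref{prop:lower_bound_on_expected_eliminated_edges} via the factorization $\abs{E}=\abs{F}\cdot\frac{\abs{E}}{\abs{F}}$, which is just the inequality $\max\{a,b\}\geq\sqrt{ab}$ that you spell out. You have merely made explicit the one-line arithmetic step the paper leaves implicit.
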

\begin{proof}
	Lemma~\ref{prop:lower_bound_on_expected_eliminated_edges} and the equality $\abs{E}=\abs{F}\cdot\frac{\abs{E}}{\abs{F}}$ imply $\max\{\abs{F},\frac{\abs{E}}{\abs{F}}\}\geq\sqrt{\abs{E}}$.
\end{proof}

\begin{lemma}
If the number of edges is greater or equal $n\sqrt{n}$, the expected number of \emph{clients} removed by \ouralg in one iteration of the while loop is at least $\sqrt{\abs{C}}$.
If it is smaller or equal $n\sqrt{n}$, the expected number of removed \emph{edges} is at least $\sqrt{n\sqrt{n}}$.
\end{lemma}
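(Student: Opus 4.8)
The plan is to mirror the case distinction in the statement, feeding each case with one of the two preceding bounds: Case~1 ($\abs{E}\geq n\sqrt{n}$) with the client bound of Lemma~\ref{prop:lower_bound_on_expected_eliminated_clients}, and Case~2 ($\abs{E}\leq n\sqrt{n}$) with the edge bound of Lemma~\ref{prop:lower_bound_on_expected_eliminated_edges} (\resp its Corollary~\ref{cor:lower_bound_on_expected_eliminated_edges}). Throughout I would use only the two elementary size relations $\abs{F}\leq n$ and $\abs{C}\leq n$, which hold because $n=\abs{F}+\abs{C}$, together with the fact that $G$ has no isolated nodes, so that $\abs{E}\geq\abs{C}$ and $\abs{E}\geq\abs{F}$.

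For Case~1 the argument is a short computation. By Lemma~\ref{prop:lower_bound_on_expected_eliminated_clients} the expected number of removed clients is at least $\abs{E}/\abs{F}$. Since $\abs{F}\leq n$ and, by assumption, $\abs{E}\geq n\sqrt{n}$, this is at least $n\sqrt{n}/n=\sqrt{n}$. Finally $\abs{C}\leq n$ gives $\sqrt{\abs{C}}\leq\sqrt{n}$, so the expected number of removed clients is at least $\sqrt{\abs{C}}$, as claimed. No further structural information about $G$ is needed here.

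For Case~2 I would start from Lemma~\ref{prop:lower_bound_on_expected_eliminated_edges}, which gives that the expected number of removed edges is at least $\max\{\abs{F},\abs{E}/\abs{F}\}$, and split on the size of $\abs{F}$ relative to the target $\sqrt{n\sqrt{n}}=n^{3/4}$. If $\abs{F}\geq n^{3/4}$ the first term already dominates and the bound follows immediately. The remaining regime $\abs{F}<n^{3/4}$ is where I expect the real work: here I would try to push the second term, bounding $\abs{E}/\abs{F}$ from below using the no-isolated-node inequality $\abs{E}\geq\abs{C}=n-\abs{F}$. The delicate point, and the step I expect to be the main obstacle, is the \emph{balanced} range $\abs{F}\approx\sqrt{n}$: there both $\abs{F}$ and $\abs{E}/\abs{F}$ can be simultaneously as small as $\Theta(\sqrt{n})$, so the crude bound $\max\{\abs{F},\abs{E}/\abs{F}\}\geq\sqrt{\abs{E}}$ of Corollary~\ref{cor:lower_bound_on_expected_eliminated_edges} only yields $\Theta(\sqrt{n})$, which is below $n^{3/4}$. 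Closing this range appears to be the crux: one must either exploit finer structure of the removal process than the two aggregate lemmas capture (\eg that the clients removed together with a surviving local-maximum facility carry many further edges into the count), or invoke the hypothesis $\abs{E}\leq n\sqrt{n}$ so as to exclude the near-balanced configurations. This is the part of the proof I would expect to demand the most care.
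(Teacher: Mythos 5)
Your Case~1 is exactly the paper's proof of the first statement: Lemma~\ref{prop:lower_bound_on_expected_eliminated_clients} supplies the rate $\abs{E}/\abs{F}$, and the chain $\abs{E}/\abs{F}\geq n\sqrt{n}/\abs{F}\geq n\sqrt{n}/n=\sqrt{n}\geq\sqrt{\abs{C}}$ is verbatim what the paper writes. For Case~2, however, the paper does not attempt the work you anticipate: its entire proof is the sentence ``the second statement follows immediately from Corollary~\ref{cor:lower_bound_on_expected_eliminated_edges}.'' That sentence is a non sequitur. Under the hypothesis $\abs{E}\leq n\sqrt{n}$, the corollary's bound $\sqrt{\abs{E}}$ satisfies $\sqrt{\abs{E}}\leq\sqrt{n\sqrt{n}}$ --- the inequality points the wrong way --- so nothing follows from it.

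Your instinct that the balanced regime $\abs{F}\approx\sqrt{n}$ is the crux is correct, and in fact the gap cannot be closed, because the second statement is false as written. Take $\abs{F}=\sqrt{n}$ facilities, one ``hub'' client adjacent to all of them, and $\approx n-\sqrt{n}$ degree-one clients split evenly among the facilities. Then $\abs{E}\approx n\leq n\sqrt{n}$ and $\max\{\abs{F},\abs{E}/\abs{F}\}=\Theta(\sqrt{n})$ (precisely the balanced configuration you flagged), but now $G_{F}$ is a clique, since every pair of facilities shares the hub. Hence a single iteration of the while loop selects exactly one facility $i^*$, and eliminating $N^+(\{i^*\})$ removes only the edges incident to $i^*$'s $\approx\sqrt{n}$ private clients, to $i^*$ itself, and to the hub: $\Theta(\sqrt{n})=o\bigl(\sqrt{n\sqrt{n}}\bigr)$ edges, deterministically. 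So neither finer structure of the removal process nor the hypothesis $\abs{E}\leq n\sqrt{n}$ can rescue the claim, and you should not sink more effort into that regime. The correct repair is to weaken the second conclusion to ``at least $\sqrt{\abs{E}}$,'' which \emph{is} immediate from Corollary~\ref{cor:lower_bound_on_expected_eliminated_edges} and is all the downstream argument uses: in Lemma~\ref{client_constant_fraction}(b) the quantity $\frac{1}{2}\sqrt{n_t\sqrt{n_t}}$ is the \emph{number} of edge-light rounds, not a per-round rate, and the edge-light condition enters only as $\abs{E_t}\leq n_t\sqrt{n_t}$, so that $\frac{1}{2}\sqrt{n_t\sqrt{n_t}}\cdot\sqrt{\abs{E_t}/2}\geq\frac{1}{2\sqrt{2}}\abs{E_t}$. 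With that restatement your Case~2 becomes, like the paper's, a one-liner, and the rest of the Section~3 analysis goes through unchanged.
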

\begin{proof}
For the first statement, we can bound the expected number of removed clients with the help of Lemma \ref{prop:lower_bound_on_expected_eliminated_clients} as follows:
\begin{equation*}
	\frac{\abs{E}}{\abs{F}} \geq \frac{n\sqrt{n}}{\abs{F}} \geq \frac{n\sqrt{n}}{n} = \sqrt{n} \geq \sqrt{\abs{C}}
\end{equation*}	
The second statement follows immediately from Corollary~\ref{cor:lower_bound_on_expected_eliminated_edges}.
\end{proof}

We call a round $t$ \emph{edge-heavy}, if in this round the number of edges is at least $n_t\sqrt{n}$.
Otherwise, we refer to it as \emph{edge-light}. 

\begin{lemma}\label{client_constant_fraction}
For an arbitrary round $t$, consider the effect of several iterations of \ouralg's while loop starting at round $t$.
\begin{enumerate}[(a)]
\item After $\frac{1}{2}\sqrt{n_t}$ (not necessarily consecutive) \emph{edge-heavy} rounds, at least $\frac{1}{2\sqrt{2}}\abs{C_t}$ clients are removed in expectation.
\item After $\frac{1}{2}\sqrt{n_t\sqrt{n_t}}$ (not necessarily consecutive) \emph{edge-light} rounds, at least $\frac{1}{2\sqrt{2}}\abs{E_t}$ edges are removed in expectation.
\end{enumerate}
\end{lemma}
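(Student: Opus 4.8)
The plan is to prove both parts by chaining the single-round expectation bounds established above (the per-round guarantees that an \heavy round removes at least $\sqrt{\abs{C_s}}$ clients and an \light round removes at least $\sqrt{n_s\sqrt{n_s}}$ edges, in expectation) over the relevant family of rounds, controlled by a two-case dichotomy and the fact that $\abs{C_s}$ and $\abs{E_s}$ are non-increasing along the execution of \ouralg. I focus on part (a); part (b) is entirely analogous. Write $m\coloneqq\frac{1}{2}\sqrt{n_t}$ and split on whether the client count ever falls below $\frac{1}{2}\abs{C_t}$ during the first $m$ \heavy rounds following $t$.

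In Case 1, the client count drops below $\frac{1}{2}\abs{C_t}$ within these rounds; then at least $\frac{1}{2}\abs{C_t}\geq\frac{1}{2\sqrt{2}}\abs{C_t}$ clients have been removed, which already beats the target on every such trajectory. In Case 2, the client count stays at least $\frac{1}{2}\abs{C_t}$ throughout all $m$ \heavy rounds; then in each such round $s$ the conditional expected removal is at least $\sqrt{\abs{C_s}}\geq\sqrt{\frac{1}{2}\abs{C_t}}$, so summing over the $m$ rounds yields an expected removal of at least
\[
m\sqrt{\tfrac{1}{2}\abs{C_t}}=\frac{\sqrt{n_t\abs{C_t}}}{2\sqrt{2}}\geq\frac{\abs{C_t}}{2\sqrt{2}},
\]
where the last inequality uses $n_t\geq\abs{C_t}$, as the clients form a subset of the nodes. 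For part (b) I would run the same dichotomy on the edge count $\abs{E_s}$ with threshold $\frac{1}{2}\abs{E_t}$ and $m'\coloneqq\frac{1}{2}\sqrt{n_t\sqrt{n_t}}$ \light rounds. The only new ingredient is that in the ``stays above half'' case an \light round removes at least $\sqrt{n_s\sqrt{n_s}}$ edges, which I would lower bound by $\sqrt{\abs{E_s}}\geq\sqrt{\frac{1}{2}\abs{E_t}}$ using the defining inequality $\abs{E_s}\leq n_s\sqrt{n_s}$ of an \light round together with $\abs{E_s}\geq\frac{1}{2}\abs{E_t}$; summing over the $m'$ rounds and invoking $n_t\sqrt{n_t}\geq\abs{E_t}$ then reproduces $\frac{1}{2\sqrt{2}}\abs{E_t}$ (in the regime where $t$ itself is \light; otherwise the edges must already have dropped and Case~1 applies).

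The hard part will be making the chaining rigorous \emph{in expectation}: the dichotomy (Case 1 versus Case 2) is decided by the random trajectory, whereas the per-round guarantee is only a conditional expectation given the current, random state, so one may not simply add the two cases. The way I would handle this is to introduce a stopping time $\nu$ equal to the number of initial \heavy rounds (capped at $m$) during which the client count is still at least $\frac{1}{2}\abs{C_t}$, and write $\Delta_i$ for the number of clients removed in the $i$-th such round. Since the event $\{\nu\geq i\}$ is measurable with respect to the state at the start of the $i$-th \heavy round, the tower property yields $E\!\left[\sum_{i\leq\nu}\Delta_i\right]\geq\sqrt{\frac{1}{2}\abs{C_t}}\cdot E[\nu]$, and the two cases above correspond precisely to $\nu=m$ and $\nu<m$. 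The delicate step is recombining these contributions without losing the constant $\frac{1}{2\sqrt{2}}$; here the monotonicity of $\abs{C_s}$ and the fact that Case 1 already exceeds the target pointwise are what keep the bound tight, and the same stopping-time device transfers verbatim to the edge count for part (b).
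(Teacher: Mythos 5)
Your proposal takes essentially the same route as the paper's proof: the same dichotomy on whether the client count ever falls below $\abs{C_t}/2$ during the $\frac{1}{2}\sqrt{n_t}$ edge-heavy rounds, the same per-round bound $\sqrt{\abs{C_s}}\geq\sqrt{\abs{C_t}/2}$ in the ``stays above half'' case, and the identical closing arithmetic $\frac{1}{2}\sqrt{n_t}\cdot\sqrt{\abs{C_t}/2}=\sqrt{n_t\abs{C_t}}/(2\sqrt{2})\geq\abs{C_t}/(2\sqrt{2})$ via $n_t\geq\abs{C_t}$. Where you genuinely go beyond the paper is in the recombination of the two cases. The paper simply writes
\begin{equation*}
E[Y]\geq\frac{\abs{C_t}}{2}\cdot Pr\left(Y\geq\frac{\abs{C_t}}{2}\right)+\frac{1}{2}\sqrt{n_t}\cdot\sqrt{\frac{\abs{C_t}}{2}}\cdot Pr\left(Y<\frac{\abs{C_t}}{2}\right),
\end{equation*}
which tacitly asserts $E[Y\mid Y<\abs{C_t}/2]\geq\frac{1}{2}\sqrt{n_t}\sqrt{\abs{C_t}/2}$ --- a conditioning on a future event. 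You are right to flag this: whenever $n_t\geq2\abs{C_t}$ the claimed lower bound exceeds $\abs{C_t}/2$, while $E[Y\mid Y<\abs{C_t}/2]<\abs{C_t}/2$ by definition, so the paper's inequality cannot hold literally (unless $Pr(Y<\abs{C_t}/2)=0$); your stopping time $\nu$, with $\{\nu\geq i\}$ measurable at the start of round $i$ and the tower property giving $E\bigl[\sum_{i\leq\nu}\Delta_i\bigr]\geq\sqrt{\abs{C_t}/2}\cdot E[\nu]$, is the correct repair.

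However, your final assertion --- that the recombination preserves the constant $\frac{1}{2\sqrt{2}}$ --- is exactly the step you leave open, and the straightforward completion does not deliver it. Setting $a\coloneqq\frac{1}{2}\sqrt{n_t}\sqrt{\abs{C_t}/2}$, $b\coloneqq\abs{C_t}/2$, $p\coloneqq Pr(\nu=m)$, your two estimates yield only $E[Y]\geq\max\{ap,\,b(1-p)\}\geq ab/(a+b)$, which in the worst case $a=b/\sqrt{2}$ (attained at $n_t=\abs{C_t}$) equals $\abs{C_t}/(2(1+\sqrt{2}))<\abs{C_t}/(2\sqrt{2})$; one cannot restrict the per-round bound to the event $\{\nu=m\}$, since conditioning on the count never dropping biases the $\Delta_i$ downward --- the very defect you diagnosed in the naive version. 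So your rigorous variant proves the lemma with a somewhat smaller constant, which is harmless for everything downstream (Lemma~\ref{client_edge_constant_probability} and the main theorem only need \emph{some} positive constant fraction), but as a proof of the stated $\frac{1}{2\sqrt{2}}$ it stops short, just as the paper's own proof does. A similar caveat applies to your parenthetical in part (b): if round $t$ is edge-heavy with $n_t\sqrt{n_t}<\abs{E_t}<2n_t\sqrt{n_t}$, the edges need not yet have dropped by half when the first edge-light round occurs, so neither of your cases applies cleanly and again only a weaker constant is recovered --- a boundary case the paper's ``can be proven analogously'' does not address either.
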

\begin{proof}
	We prove only the first statement, since the second can be proven analogously. Let $Y$ denote a random variable counting the number of clients removed during $\frac{1}{2}\sqrt{n_t}$ edge-heavy rounds. If $Y<\abs{C_t}/2$, the number of clients in each of the considered rounds was at least $\abs{C_t}/2$. Thus, by Lemma~\ref{prop:lower_bound_on_expected_eliminated_clients}, in each of these $\frac{1}{2}\sqrt{n_t}$ rounds at least $\sqrt{\abs{C_t}/2}$ clients were removed in expectation. If $Y\geq\abs{C_t}/2$, the number of removed clients is trivially lower bounded by $\abs{C_t}/2$. We get:
	\begin{equation*}
	\begin{split}
	E[Y]&\geq\frac{\abs{C_t}}{2}\cdot Pr\left(Y\geq\frac{\abs{C_t}}{2}\right)+\frac{1}{2}\sqrt{n_t}\cdot\sqrt{\frac{\abs{C_t}}{2}}\cdot Pr\left(Y<\frac{\abs{C_t}}{2}\right)\\
	&\geq\frac{\abs{C_t}}{2\sqrt{2}}\cdot Pr\left(Y\geq\frac{\abs{C_t}}{2}\right)+\frac{\abs{C_t}}{2\sqrt{2}}\cdot Pr\left(Y<\frac{\abs{C_t}}{2}\right) =\frac{\abs{C_t}}{2\sqrt{2}}
	\enspace.
	\end{split}
	\end{equation*}
\end{proof}

\begin{lemma}\label{client_edge_constant_probability}
For an arbitrary round $t$, consider the effect of several iterations of \ouralg's while loop starting at round $t$.
\begin{enumerate}[(a)]
\item\label{client_constant_probability} The probability of removing at least $\frac{1}{4\sqrt{2}}\abs{C_t}$ clients after $\frac{1}{2}\sqrt{n_t}$ \emph{edge-heavy} rounds is at least a constant $>0$.
\item\label{edge_constant_probability} The probability of removing at least $\frac{1}{4\sqrt{2}}\abs{E_t}$ edges after $\frac{1}{2}\sqrt{n_t\sqrt{n_t}}$ \emph{edge-light} rounds is at least a constant $>0$.
\end{enumerate}
\end{lemma}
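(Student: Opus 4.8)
The plan is to derive both statements from Lemma~\ref{client_constant_fraction} by a reverse Markov (second-moment-free) argument. For part~(\ref{client_constant_probability}), let $Y$ be the random variable counting the clients removed during the $\frac{1}{2}\sqrt{n_t}$ edge-heavy rounds, exactly as in the proof of Lemma~\ref{client_constant_fraction}. Two facts about $Y$ are immediate: it is bounded above by $\abs{C_t}$, since one can never remove more clients than are present at round $t$, and its expectation satisfies $E[Y]\geq\frac{1}{2\sqrt{2}}\abs{C_t}$ by Lemma~\ref{client_constant_fraction}. The goal is to turn this lower bound on the mean of a bounded nonnegative variable into a lower bound on the probability that $Y$ is a constant fraction of $\abs{C_t}$.

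First I would record the elementary reverse Markov inequality: if $0\leq Y\leq M$ and $a<E[Y]$, then splitting the expectation over the events $\{Y<a\}$ and $\{Y\geq a\}$ gives
\begin{equation*}
E[Y]=E[Y\cdot\mathbf{1}_{\{Y<a\}}]+E[Y\cdot\mathbf{1}_{\{Y\geq a\}}]\leq a\,Pr(Y<a)+M\,Pr(Y\geq a)\enspace,
\end{equation*}
so that $Pr(Y\geq a)\geq\frac{E[Y]-a}{M-a}$. Applying this with $M=\abs{C_t}$ and the threshold $a=\frac{1}{4\sqrt{2}}\abs{C_t}$, which lies strictly below the mean bound $\frac{1}{2\sqrt{2}}\abs{C_t}\leq E[Y]$, yields
\begin{equation*}
Pr\!\left(Y\geq\tfrac{1}{4\sqrt{2}}\abs{C_t}\right)\geq\frac{\frac{1}{2\sqrt{2}}\abs{C_t}-\frac{1}{4\sqrt{2}}\abs{C_t}}{\abs{C_t}-\frac{1}{4\sqrt{2}}\abs{C_t}}=\frac{\frac{1}{4\sqrt{2}}}{1-\frac{1}{4\sqrt{2}}}>0\enspace,
\end{equation*}
a constant independent of $t$ and of the instance (numerically $\approx 0.21$), as claimed.

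Part~(\ref{edge_constant_probability}) follows by the identical argument, replacing clients by edges: let $Z$ count the edges removed during the $\frac{1}{2}\sqrt{n_t\sqrt{n_t}}$ edge-light rounds, observe $0\leq Z\leq\abs{E_t}$ and $E[Z]\geq\frac{1}{2\sqrt{2}}\abs{E_t}$ by the second statement of Lemma~\ref{client_constant_fraction}, and apply the same reverse Markov bound with $M=\abs{E_t}$ and $a=\frac{1}{4\sqrt{2}}\abs{E_t}$.

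I do not anticipate a genuine obstacle here; the only point requiring care is the trivial upper bound $Y\leq\abs{C_t}$ (respectively $Z\leq\abs{E_t}$), which is precisely what makes the reverse Markov estimate nonvacuous — without an upper bound on the variable, a large mean need not force any probability mass away from zero. The choice of the fraction $\frac{1}{4\sqrt{2}}$ as exactly half of the mean bound $\frac{1}{2\sqrt{2}}$ guarantees that the numerator $E[Y]-a$ stays positive and bounded below by $\frac{1}{4\sqrt{2}}\abs{C_t}$, keeping the resulting constant safely away from $0$.
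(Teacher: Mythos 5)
Your proposal is correct and follows essentially the same route as the paper: the paper's proof also splits the expectation over the events $\{X\geq\frac{1}{4\sqrt{2}}\abs{C_t}\}$ and its complement, bounding $\frac{1}{2\sqrt{2}}\abs{C_t}\leq E[X]\leq\mu M+(1-\mu)\frac{1}{4\sqrt{2}}\abs{C_t}$ and solving for $\mu$, which is exactly your reverse Markov inequality. The one difference is that you cap the variable by $M=\abs{C_t}$ (respectively $\abs{E_t}$) where the paper uses the looser cap $M=n$; your choice is in fact the more careful one, since with $M=n$ the resulting bound $\mu\geq\frac{(1/(4\sqrt{2}))\abs{C_t}}{n-(1/(4\sqrt{2}))\abs{C_t}}$ is a uniform constant only when $\abs{C_t}=\Theta(n)$, whereas your bound $\mu\geq\frac{1/(4\sqrt{2})}{1-1/(4\sqrt{2})}$ is a genuine instance-independent constant.
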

\begin{proof}
We prove only the first statement.
The second can be proven analogously.
	
By Lemma \ref{client_constant_fraction} $\frac{1}{2\sqrt{2}}\abs{C_t}$ clients are removed after $\frac{1}{2}\sqrt{n_t}$ \emph{edge-heavy} rounds in expectation. Let $X$ define a random variable counting the number of clients removed after $\frac{1}{2}\sqrt{n_t}$ \emph{edge-heavy} rounds. Now define $\mu$ to be the probability that more than $\frac{1}{4\sqrt{2}}\abs{C_t}$ clients are removed.
We can bound $E[X]$ in the following way:
\begin{equation*}
	\frac{1}{2\sqrt{2}}\abs{C_t} \leq E[X] \leq \mu n + (1-\mu) \frac{1}{4\sqrt{2}}\abs{C_t}
\end{equation*} 
Solving this inequality for $\mu$ yields $\mu \geq c$, where $c$ is a positive constant.
\end{proof}

We are now ready to prove our main theorem.

\begin{theorem}
Algorithm \ouralg terminates with high probability after $\bigO{n^{\frac{3}{4}}\log(n)}$ iterations of the while loop.
\end{theorem}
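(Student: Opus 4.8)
The plan is to bound the total number of \emph{edge-heavy} iterations and the total number of \emph{edge-light} iterations \emph{separately}, and then add the two bounds. The split is legitimate because both the client count and the edge count are non-increasing over the course of \ouralg (nodes are only ever eliminated), so progress made on the clients during edge-heavy rounds is never undone during edge-light rounds, and vice versa. Crucially, Lemmas~\ref{client_constant_fraction} and~\ref{client_edge_constant_probability} are stated for ``not necessarily consecutive'' rounds of the respective type, so the arbitrary interleaving of the two round types throughout the execution causes no difficulty; we may simply collect the rounds of each type in the order they occur.

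First I would bound the edge-heavy rounds. I group them into \emph{epochs}, where one epoch consists of the next $\frac{1}{2}\sqrt{n_t}$ edge-heavy rounds, with $n_t$ the value of $n$ at the epoch's start. Since $n_t\leq n$ throughout, every epoch contains at most $\frac{1}{2}\sqrt{n}$ edge-heavy rounds. By Lemma~\ref{client_edge_constant_probability}(a), applied at the start of each epoch (the lemma holds for arbitrary $t$, hence conditionally on the current configuration), such an epoch removes at least a $\frac{1}{4\sqrt{2}}$-fraction of the present clients with probability at least a constant $c>0$; call it \emph{successful}. A successful epoch multiplies the client count by $(1-\frac{1}{4\sqrt{2}})<1$, and unsuccessful epochs never increase it, so starting from at most $n$ clients, $k=\Theta(\log n)$ successful epochs suffice to drive the client count below $1$, i.e.\ to $0$. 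Once no clients remain there are no edges, so every facility is isolated in $G_F$ and is therefore selected into $I$ in the next round, whence the loop terminates.

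To turn the constant per-epoch success probability into a high-probability statement I would use concentration. Since each epoch is successful with conditional probability at least $c$, the number of successes among $m$ epochs stochastically dominates a $\mathrm{Binomial}(m,c)$ variable; a Chernoff bound then shows that $m=O(\log n)$ epochs contain at least $k$ successes with probability $1-n^{-\Omega(1)}$, where the constant in $m$ controls the polynomial. Hence, with high probability, the client count reaches $0$ within $m\cdot\frac{1}{2}\sqrt{n}=O(\sqrt{n}\log n)$ edge-heavy rounds, so the total number of edge-heavy rounds is $O(\sqrt{n}\log n)$ w.h.p. The edge-light case is entirely symmetric using Lemma~\ref{client_edge_constant_probability}(b): each edge-light epoch consists of $\frac{1}{2}\sqrt{n_t\sqrt{n_t}}=\frac{1}{2}n_t^{3/4}\leq\frac{1}{2}n^{3/4}$ rounds and removes a constant fraction of the edges with constant probability; starting from $\abs{E}\leq n^2$ edges, $O(\log n)$ successful epochs reduce the edge count to $0$, and the same Chernoff argument bounds the total number of edge-light rounds by $O(n^{3/4}\log n)$ w.h.p.

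Finally I would combine the two bounds by a union bound: every iteration of the while loop is either edge-heavy or edge-light, so the total number of iterations is at most $O(\sqrt{n}\log n)+O(n^{3/4}\log n)=O(n^{3/4}\log n)$ with high probability, which is the claimed bound. The step I expect to be the main obstacle is the concentration argument: one must justify the stochastic domination by a Binomial despite the dependence between epochs (each epoch's success probability is only lower-bounded \emph{conditionally} on the past), and one must use the epoch-length bounds $\frac{1}{2}\sqrt{n_t}\leq\frac{1}{2}\sqrt{n}$ and $\frac{1}{2}n_t^{3/4}\leq\frac{1}{2}n^{3/4}$ consistently so that the variable (shrinking) epoch lengths do not disrupt the round count. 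The round-type dichotomy and the interleaving are comparatively routine once the per-type epoch bounds are in place.
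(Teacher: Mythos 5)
Your proof is correct, and it reaches the stated bound by a slightly different decomposition than the paper. The paper does not separate the two round types: it partitions the whole execution into consecutive phases of $n^{3/4}$ rounds and observes (implicitly, by pigeonhole, since $\frac{1}{2}\sqrt{n_t}+\frac{1}{2}n_t^{3/4}\leq n_t^{3/4}\leq n^{3/4}$) that each such phase must contain either $\frac{1}{2}\sqrt{n_t}$ edge-heavy rounds or $\frac{1}{2}n_t^{3/4}$ edge-light rounds, so by Lemma~\ref{client_edge_constant_probability} each phase is ``good'' --- a constant fraction of \emph{either} clients \emph{or} edges is removed --- with constant probability; a single Chernoff bound over $\Theta(\log n)$ phases then finishes the argument. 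Your type-separated bookkeeping buys several things the paper's compressed version glosses over: it yields the sharper intermediate statement that only $O(\sqrt{n}\log n)$ rounds are edge-heavy (the $n^{3/4}$ cost comes entirely from edge-light rounds); it avoids the pigeonhole step by exploiting that Lemmas~\ref{client_constant_fraction} and~\ref{client_edge_constant_probability} are stated for non-consecutive rounds; and it makes explicit two points the paper leaves tacit, namely the monotonicity that lets progress on each quantity accumulate across interleaved rounds, and the stochastic domination by a Binomial needed because epoch successes are only lower-bounded conditionally on the past (the paper simply asserts $Pr(X_i=1)\geq c$ and applies Chernoff to dependent indicators, which needs exactly the coupling you flag). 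The paper's version is more compact and needs only one concentration bound, and by measuring progress as ``clients or edges'' uniformly it sidesteps your (correct, but extra) closing observations that zero clients forces zero edges and that all then-isolated facilities are selected in one further round; on the other hand it must implicitly budget $\Theta(\log n)$ successes for \emph{each} of the two quantities, since good phases may alternate between reducing clients and reducing edges --- a point your separation handles cleanly.
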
 
\begin{proof}
By Lemma~\ref{client_edge_constant_probability}, starting at iteration $t$ we lose at least a constant fraction of either the clients or the edges with a non-zero, constant probability after at most $\frac{1}{2}\sqrt{n_t}+\frac{1}{2}\sqrt{n_t\sqrt{n_t}}\leq n_t^{3/4}$ rounds.
Partition the number of iterations done by \ouralg into phases of $n^{3/4}$ rounds.
Let $X_i$ denote a binary random variable indicating whether the $i$-th phase was \emph{good} (we lost at least the above mentioned constant fraction of either clients or edges).
We need about $2\log n$ good rounds to ensure that the problem size has become $\bigO{1}$.
Since we have $Pr(X_i=1)\geq c$ for some positive constant $c$, we get for $X\coloneqq\sum_{i=1}^{\frac{1}{c}2\log n}X_i$:
\begin{equation*}
E[X]\geq2\log(n)
\enspace.
\end{equation*}
That is, in expectation it takes about $\frac{1}{c}\cdot2\log(n)$ phases until \ouralg terminates.
Applying a standard Chernoff bound yields:
\begin{equation*}
Pr\left(\abs{X-E[X]}>\frac{E[X]}{2}\right)< 2^{-4E[X]/3}
\enspace.
\end{equation*}
In other words, \ouralg terminates with high probability after at most $\frac{1}{c}2\log(n)$ steps.
\end{proof}

\section{Conclusion and Outlook} We presented a parallel execution of a greedy sequential algorithm for the facility location problem and showed that we can preserve its approximation factor. However, there are other sequential facility location algorithms based on the greedy approach introduced by Jain \et that yield (better) approximation factors of 1.61 \cite{FLgreedy} and 1.52 \cite{FLMahdian152}. It might be possible to execute them in the $\Congest$ model with a sublinear running time using the technique we introduced in this paper.
 
When our algorithm chooses which facilities should be opened, it bases its decision solely on the information provided by the bipartite graph $G$ used in Section 3 and all other information (client contribution, distance, facility opening costs ect.) is abstracted away (Section 3 provides reasons for this abstraction). At this level of abstraction the lower bound of solving the problem is $\sqrt{n}$: consider a bipartite graph with $\sqrt{n}$ facilities with degree of $\sqrt{n}-1$, and $\binom{\sqrt{n}}{2}$ clients with degree $2$. Edges are chosen in such a way that any two facilities are in hop distance of 2 of each other. It would be interesting to know in what way the entire available information can be used to improve the runtime and whether a poly-logarithmic runtime is possible with the same approximation factor.

 \newpage
{\footnotesize
\bibliographystyle{alpha}
\bibliography{references}}

\newcommand{\etalchar}[1]{$^{#1}$}
\begin{thebibliography}{JMM{\etalchar{+}}03}

\bibitem[ATS97]{FLfirstApprox}
K.~Aardal, {\'E}.~Tardos, and D.~B. Shmoys.
\newblock Approximation algorithms for facility location problems (extended
  abstract).
\newblock In {\em Proceedings of the 29th Annual ACM Symposium on the Theory of
  Computing (STOC)}, pages 265--274, 1997.

\bibitem[BCIS05]{FLSublinearTime}
M.~Badoiu, A.~Czumaj, P.~Indyk, and C.~Sohler.
\newblock Facility location in sublinear time.
\newblock In {\em Proc. of the 32nd International Colloquium on Automata,
  Languages and Programming (ICALP)}, pages 866--877, 2005.

\bibitem[BT10]{FLparallel}
G.E. Blelloch and K.~Tangwongsan.
\newblock Parallel approximation algorithms for facility-location problems.
\newblock In {\em Proceedings of the 22nd Annual ACM Symposium on Parallelism
  in Algorithms and Architectures (SPAA)}, pages 315--324, 2010.

\bibitem[Byr07]{FLbestKnown}
J.~Byrka.
\newblock An optimal bifactor approximation algorithm for the metric
  uncapacitated facility location problem.
\newblock In {\em Proceedings of the 10th International Workshop, APPROX, and
  11th International Workshop, RANDOM (APPROX-RANDOM)}, volume 4627 of {\em
  Lecture Notes in Computer Science}, pages 29--43. Springer, 2007.

\bibitem[CS99]{FLLP2}
F.~A. Chudak and D.~B. Shmoys.
\newblock Improved approximation algorithms for a capacitated facility location
  problem.
\newblock In {\em Proceedings of the 10th Annual ACM-SIAM Symposium on Discrete
  Algorithms (SODA)}, pages 875--876, 1999.

\bibitem[DGL10]{KFL}
B.~Degener, J.~Gehweiler, and C.~Lammersen.
\newblock Kinetic facility location.
\newblock {\em Algorithmica}, 57(3):562--584, 2010.

\bibitem[DKP10]{FLipdps}
B.~Degener, B.~Kempkes, and P.~Pietrzyk.
\newblock A local, distributed constant-factor approximation algorithm for the
  dynamic facility location problem.
\newblock In {\em Proceedings of the 24th IEEE International Parallel \&
  Distributed Processing Symposium (IPDPS)}, 2010.

\bibitem[GK99]{FLhardness}
S.~Guha and S.~Khuller.
\newblock Greedy strikes back: improved facility location algorithms.
\newblock {\em Journal of Algorithms}, 31(1):228--248, 1999.

\bibitem[GLS06]{FLunfiformApprox}
J.~Gehweiler, C.~Lammersen, and C.~Sohler.
\newblock A distributed {O}(1)-approximation algorithm for the uniform facility
  location problem.
\newblock In {\em Proceedings of the 18th Annual ACM Symposium on Parallelism
  in Algorithms and Architectures (SPAA)}, pages 237--243, 2006.

\bibitem[JMM{\etalchar{+}}03]{FLgreedy}
K.~Jain, M.~Mahdian, E.~Markakis, A.~Saberi, and V.~V. Vazirani.
\newblock Greedy facility location algorithms analyzed using dual fitting with
  factor-revealing {LP}.
\newblock {\em Journal of the ACM}, 50(6):795--824, 2003.

\bibitem[JV01]{FLprimalDual}
K.~Jain and V.~V. Vazirani.
\newblock Approximation algorithms for metric facility location and k-median
  problems using the primal-dual schema and lagrangian relaxation.
\newblock {\em Journal of the ACM}, 48(2):274--296, 2001.

\bibitem[Lub85]{luby}
M.~Luby.
\newblock A simple parallel algorithm for the maximal independent set problem.
\newblock In {\em STOC '85: Procc of the seventeenth annual ACM symposium on
  Theory of computing}, pages 1--10. ACM, 1985.

\bibitem[MP00]{FLmettuPlaxton}
R.~R. Mettu and C.~G. Plaxton.
\newblock The online median problem.
\newblock In {\em Proceedings of the 41st IEEE Annual Symposium on Foundations
  of Computer Science (FOCS)}, pages 339--348, 2000.

\bibitem[MW05]{FLapprox}
T.~Moscibroda and R.~Wattenhofer.
\newblock Facility location: distributed approximation.
\newblock In {\em Proceedings of the 24th Annual ACM Symposium on Principles of
  Distributed Computing (PODC)}, pages 108--117, 2005.

\bibitem[MYZ02]{FLMahdian152}
M.~Mahdian, Y.~Ye, and J.~Zhang.
\newblock Improved approximation algorithms for metric facility location
  problems.
\newblock In {\em Proceedings of the 5th International Workshop, APPROX, and
  11th International Workshop, RANDOM (APPROX-RANDOM)}, Lecture Notes in
  Computer Science, pages 229--242. Springer, 2002.

\bibitem[Pel00]{Pe00}
D.~Peleg.
\newblock {\em Distributed Computing: A locality-sensitive approach}, volume~5.
\newblock SIAM Monographs on Discrete Mathematics and Applications, 2000.

\bibitem[PP09]{FLdistPrimalDual}
S.~Pandit and S.~V. Pemmaraju.
\newblock Return of the primal-dual: distributed metric facility location.
\newblock In {\em Proceedings of the 28th Annual ACM Symposium on Principles of
  Distributed Computing (PODC)}, pages 180--189, 2009.

\bibitem[PP10]{podc2010}
Saurav Pandit and Sriram~V. Pemmaraju.
\newblock Rapid randomized pruning for fast greedy distributed algorithms.
\newblock In {\em PODC}, pages 325--334, 2010.

\bibitem[PT03]{groupstrategyproof}
M.~Pal and E.~Tardos.
\newblock Group strategy proof mechanisms via primal-dual algorithms.
\newblock In {\em Proc. of the 44th Annual IEEE Symposium on Foundations of
  Computer Science}, pages 584--593, 2003.

\bibitem[RV98]{symmetry98}
S.~Rajagopalan and V.~V. Vazirani.
\newblock Primal-dual rnc approximation algorithms for set cover and covering
  integer programs.
\newblock {\em SIAM J. Comput.}, 28(2):525--540, 1998.

\end{thebibliography}

\newpage
\appendix

\section{Proofs for Lemma 1 and 2}
\begin{proof}[Proof of Lemma~\ref{lem:metric_lp_property}]
	If $\alpha_j/(1+\epsilon) \leq \alpha_{j'}$ the lemma obviously holds. Thus, let us assume the opposite. Let $t$ be the phase when our algorithm connects client $j'$ to some facility $i'$ and $s$ the phase $i'$ was opened. Facility $i'$ was either opened before $j'$ was connected to it ($s < t$) or opened exactly in the same phase client $j'$ was connected to it ($s = t$). Since $\alpha_{j} > \alpha_{j'}$, we know that client $j$ was connected in phase $u \geq t+1$.  At time $s \leq t$ the contribution of client $j$ was not enough to cover the distance between $j$ and $i'$, since otherwise $j$ would be connected in phase $s$ with $i'$. This means that once the contribution of $j$ reaches $c_{i'j}$, client $j$ will be connected to $i'$. Since the contribution is increased by multiplying the current $\alpha$ value with $(1+\epsilon)$, it can not become greater than $(1+\epsilon) c_{i'j}$, hence $\alpha_j \leq (1+\epsilon) c_{i'j}$. Thanks to the triangle inequality we have $\alpha_j / (1+ \epsilon) \leq c_{i'j} \leq c_{i'j'} + c_{ij'} + c_{ij} \leq \alpha_{j'} + c_{ij'} + c_{ij}$.
\end{proof}

\begin{proof}[Proof of Lemma~\ref{lem:facilities_not_overpaied_property}]
	To prove by contradiction, we assume that the inequality does not hold. Thus, we have $\sum_{l=j}^{k} \max(\alpha_j - (1+\epsilon)c_{il}, 0) > (1+\epsilon)f_i$. Since we ordered the clients according to their $\alpha$ value, we know that for $l \geq j$ we have $\alpha_l \geq \alpha_j$. Let $t$ be the phase client $j$ was connected to a facility. By the assumption facility $i$ is fully paid for (i.e\text{.} $i$'s current payment is $\geq f_i$)  in phase $s < t$. There must be at least one client $l$, with $j \leq l \leq k$ and $\alpha_j - (1+\epsilon)c_{il} > 0$ which was connected to $i$ in a phase $q < t$. For this client $l$, we have $\alpha_l < \alpha_j$, since the algorithm will stop increasing $\alpha_l$ when $l$ is connected to a fully paid facility (see Fact 1). 
\end{proof}

\end{document}